\newtheorem{theorem}{Theorem}
\newtheorem{example}{Example}
\newtheorem{definition}{Definition}
\newtheorem{proposition}{Proposition}
\title{A Kolmogorov--Arnold Neural Model for \\ Cascading Extremes}
\author{
  Miguel de Carvalho\\
  School of Mathematics\\
  University of Edinburgh\\
  EH9 3FD, Edinburgh, UK \\
  \texttt{Miguel.deCarvalho@ed.ac.uk} \\
   \And
 Clemente Ferrer \\
  Department of Mathematics\\
  Universidad T\'ecnica Federico Santa Maria\\
  Valpara\'iso, Chile \\
  \texttt{Clemente.Ferrer@usm.cl} \\
  \And
 Ronny Vallejos \\
  Department of Mathematics\\
  Universidad T\'ecnica Federico Santa Maria\\
  Valpara\'iso, Chile \\
  \texttt{Ronny.Vallejos@usm.cl} \\
}
\begin{document}
\maketitle
\begin{abstract}
This paper addresses the growing concern of cascading extreme events, such as an extreme earthquake followed by a tsunami, by presenting a novel method for risk assessment focused on these domino effects. The proposed approach develops an extreme value theory framework within a Kolmogorov–Arnold network (KAN) to estimate the probability of one extreme event triggering another, conditionally on a feature vector. An extra layer is added to the KAN architecture to ensure that the parameter of interest lies within the unit interval, and we refer to the resulting neural model as KANE (KAN with Natural Enforcement). The proposed method is backed by exhaustive numerical studies and further illustrated with real-world applications to seismology and climatology.
\end{abstract}

\keywords{Bernoulli process \and Chained extreme events \and KAN \and Kolmogorov superposition theorem \and Neural network \and Multivariate extremes \and Regression models for extremes}

\section{Introduction}
Record-breaking extreme events---such as catastrophic wildfires, unprecedented flooding, intense hurricanes, and unparalleled heatwaves---underscore the urgent need to strengthen our quantitative understanding of these occurrences. Extreme Value Theory (EVT) offers a solid mathematical framework, leveraging regular variation and asymptotic principles to estimate risks of such events by extrapolating beyond the limits of available data, into the tails of a distribution \citep{coles2001, beirlant2004,  dehaan2006, resnick2007, HandbookExtremes2026}.

While it is widely recognized by practitioners that extreme events tend to occur in complex sequential forms \citep{cutter2018, raymond2020}, statistical modelling of this context from an EVT viewpoint is still underdeveloped. Multivariate EVT, though a natural starting point, falls short by: i) disregarding the triggering role of certain events; ii) overlooking the order and sequential nature of extreme event cascades; iii) lacking the ability to model feedback loops between events. 

Inspired by the multivariate EVT framework, this paper introduces a novel concept---the POC (Probability of Cascade) surface---which assesses the probability of domino effects between extreme events conditionally on a covariate or feature vector $\mathbf{x} = (x_1, \dots, x_d)^{T}$. As it will be shown below, the POC surface can be interpreted as the probability of a cascading extremal event, as it quantifies the probability that a trigger event (such as an earthquake exceeding magnitude $u$) results in a follow-up event (like a 
subsequent tsunami) as a function of a covariate. The proposed POC-based approach is fully general in the sense that the focus can be placed beyond the case where follow-up event is binary. In particular, we extend the framework to a multi-class setting, allowing for different types of follow-up extreme events, and to an ordinal context, enabling follow-up extreme events to vary in ordinal severity. The case where the follow-up event is continuous includes as a special case the conditional coefficient of extremal dependence introduced by \cite{lee2024}. 

To learn about the POC surface from the data, we develop a neural model grounded on Kolmogorov's superposition theorem.\footnote{This result is also known as Kolmogorov--Arnold representation theorem. Among other things, it is well-known for refuting a conjecture implicit in Hilbert's 13th problem presented in 1900 at the International Congress of Mathematicians.}
\begin{theorem}[Kolmogorov's superposition theorem]\label{Kthem}
  Let $f:[0, 1]^d \to \mathbb{R}$ be a continuous function. Then, $f$ can be expressed as follows
  \begin{equation}\label{rep}
    f(\mathbf{x}) = \sum_{i=1}^{2 d + 1} \Phi_i^{(2)}\left(\sum_{j=1}^d \Phi_{i, j}^{(1)}(x_j)\right), \qquad \mathbf{x} = (x_1, \dots, x_d)^{T},
  \end{equation}
  for some continuous one-dimensional functions $\Phi_{i, j}^{(1)}$ and $\Phi_{j}^{(2)}$.
\end{theorem}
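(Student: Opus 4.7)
The plan is to prove the theorem in two stages, following the classical Kolmogorov--Lorentz--Sprecher line. First one constructs the inner functions $\Phi_{i,j}^{(1)}$ \emph{universally}, depending only on $d$ and not on $f$; second, one builds the outer functions $\Phi_i^{(2)}$ through an iterative approximation tailored to $f$. The inner functions will be taken in the separable form $\Phi_{i,j}^{(1)}(t) = \lambda_j \phi_i(t)$, where $0<\lambda_1,\dots,\lambda_d$ are rationally independent constants with $\sum_j \lambda_j \le 1$ and $\phi_1,\dots,\phi_{2d+1}:[0,1]\to[0,1]$ are continuous and monotone; the quantities to analyse are then the $2d+1$ scalar fields $\Psi_i(\mathbf{x}) := \sum_{j=1}^d \lambda_j\, \phi_i(x_j)$.

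The combinatorial core is the construction of the $\phi_i$. At every resolution $k$, partition $[0,1]$ into disjoint closed intervals of length $\sim 2^{-k}$ separated by thin gaps, and assemble $2d+1$ shifted product covers of $[0,1]^d$ by cubes so that every point of $[0,1]^d$ lies in at least $d+1$ of them. The number $2d+1$ is the minimum for which this multiplicity is forceable, by a standard combinatorial argument on shifted lattices. One then builds the $\phi_i$ level by level so that each $\Psi_i$ maps the cubes of the $i$-th cover to pairwise disjoint intervals of $\mathbb{R}$, with the rational independence of the $\lambda_j$ ensuring that target intervals coming from different coordinate patterns do not collide.

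With the inner functions fixed, set $f_0 := f$ and $M := \|f\|_\infty$. At step $k$, define a continuous $g_i^{(k)}:\mathbb{R}\to\mathbb{R}$ that on each image of a cube of the $i$-th cover under $\Psi_i$ takes the value $f_k(\text{center})/(2d+1)$, interpolating continuously and with small amplitude on the gaps and unused intervals. Because each $\mathbf{x}$ lies in at least $d+1$ of the $2d+1$ good cubes, $\sum_i g_i^{(k)}(\Psi_i(\mathbf{x}))$ reproduces a fraction $\ge (d+1)/(2d+1)$ of $f_k(\mathbf{x})$, up to an error controlled by the modulus of continuity of $f_k$ at scale $2^{-k}$. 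Choosing $k$ large enough yields $f_{k+1} := f_k - \sum_i g_i^{(k)}\circ \Psi_i$ with $\|f_{k+1}\|_\infty \le \rho\,\|f_k\|_\infty$ for a fixed $\rho<1$. The uniformly summable series $\Phi_i^{(2)} := \sum_k g_i^{(k)}$ defines continuous outer functions delivering the representation \eqref{rep}.

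The main obstacle is reconciling, within a single construction, the competing demands on the $\phi_i$: continuity and monotonicity; geometric separation of the cube images under $\Psi_i$ at every scale; and the coverage property that every $\mathbf{x}\in[0,1]^d$ is caught by at least $d+1$ of the $2d+1$ shifted lattices. Each condition is easy in isolation, but simultaneously enforcing them across infinitely many nested scales while keeping the increments of $\phi_i$ uniformly bounded is the delicate technical heart of the argument; once that machinery is in place, the outer-function recursion reduces to routine geometric-series bookkeeping.
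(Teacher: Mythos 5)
The paper does not actually prove Theorem~\ref{Kthem}: it is stated as the classical Kolmogorov superposition theorem and used as a known result (the only proof the paper supplies is for Proposition~\ref{invar}), so there is no in-paper argument to compare yours against. Judged on its own terms, your proposal follows the standard Kolmogorov--Lorentz--Sprecher line, and in fact targets the stronger separable form $\Phi_{i,j}^{(1)}(t)=\lambda_j\phi_i(t)$, which implies the statement as written. The architecture is right: universal inner functions independent of $f$; $2d+1$ shifted cube covers arranged so that each coordinate can ``knock out'' at most one cover, giving multiplicity at least $d+1$; and the outer functions built by the contraction $\|f_{k+1}\|_\infty\le\rho\|f_k\|_\infty$ with $\rho$ slightly above $2d/(2d+1)$, which is indeed $<1$ once the modulus-of-continuity term is absorbed. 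That bookkeeping is correct, provided you also enforce $|g_i^{(k)}|\le\|f_k\|_\infty/(2d+1)$ globally (not merely ``small amplitude'') so that the at most $d$ covers missing $\mathbf{x}$ contribute at most $\tfrac{d}{2d+1}\|f_k\|_\infty$.

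The genuine gap is the one you yourself flag in the last paragraph: the simultaneous, all-scales construction of continuous monotone $\phi_i$ for which every $\Psi_i=\sum_j\lambda_j\phi_i(\cdot_j)$ sends the cubes of the $i$-th cover at every resolution $k$ to pairwise disjoint intervals, compatibly with the refinements at resolution $k+1$, is not an afterthought --- it is the theorem. Everything else in your outline (the covering combinatorics, the geometric-series recursion for $\Phi_i^{(2)}$) is routine once those inner functions exist, and your text describes the required properties without constructing objects that satisfy them. A complete proof must exhibit the nested interval systems, the inductive assignment of the increments of $\phi_i$, and the verification that rational independence of the $\lambda_j$ actually prevents collisions of image intervals across distinct cubes; as written, these are asserted as achievable rather than achieved. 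So this is a correct and well-organized plan for the classical proof, but not yet a proof. Since the paper simply invokes the theorem, the pragmatic alternative is to cite a complete source (Kolmogorov's 1957 paper, Lorentz's treatment, or Sprecher's refinement) rather than reprove it.
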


\noindent As it will be noted below, the beauty of this theorem is equally profound from both Mathematical and AI perspectives.

Mathematically speaking, it shows that any multivariate continuous function can be represented using only sums and univariate functions; the word superposition in this context refers to functions of functions; for example, 
\begin{equation}\label{kol}
  f(x_1, x_2, x_3) = g\big(a(\alpha(x_1), \beta(x_2, x_3)), b(x_1, x_2)\big), 
\end{equation}
is a superposition of univariate and bivariate functions. Hence, in this sense, Theorem~\ref{Kthem} shows that all multivariate continuous functions can be reduced to sums and superpositions of univariate functions. 

From an AI perspective, the theorem reveals a two-layer neural network architecture recently popularized by \cite{liu2024}  following their extension to deeper settings. While multi-layer perceptrons are inspired by the universal approximation theorem \citep[e.g.,][]{berlyand2023, bishop2023}, Kolmogorov–Arnold Networks (KAN) are a novel and fast-evolving addition to the AI toolbox, and are rooted on Theorem~\ref{Kthem}. A particularly impressive aspect of KAN is that they are based on the principle {that} any multivariate continuous function can be expressed exactly using only 
$2d + 1$ outer functions and $d$ inner functions. In addition to the many developments following \citeauthor{liu2024}, it should be noted that other neural approaches based on Theorem~\ref{Kthem} had already appeared in the literature \citep{lin1993, sprecher2002, montanelli2020, fakhoury2022}. 

Motivated by Theorem~\ref{Kthem}, we propose a neural model for the POC surface, constructing a network based on sums and superpositions of univariate functions. To ensure the resulting POC surface remains within the unit interval, an additional layer is incorporated into the architecture. This results in a three-layer design that enforces the unit interval constraint inherent to POC surfaces. The same principle can be applied to enforce any range constraint that the final output of a KAN may need to obey, and we refer to this neural model as KANE (Kolmogorov--Arnold Network with Natural Enforcement). Each univariate function in the Kolmogorov superposition representation of the POC surface is modeled using splines, and we provide theoretical guarantees on the model's flexibility by drawing on approximation results for splines. {See Appendix~A. Our theoretical developments may be of independent interest, as they construct an operator inspired by Theorem~\ref{Kthem}---which we term the Kolmogorov superposition operator---and show that, under mild conditions, guarantees for approximating the inner and outer functions can be translated into guarantees for the target function $f$. Finally, Appendix~B comments on how the proposed framework can be extended to a deep setting by a similar approach as in \cite{liu2024}.}

As a byproduct, this paper contributes to the fast-evolving literature on interfaces between EVT and Machine Learning \citep[e.g.,][and references therein]{bhatia2021, allouche2022ev, karpov2022, ns2024}. {Within this line, key references that relate with our context are those of \cite{jalalzai2018} and \cite{aghbalou2024}, who introduced predictive approaches for binary classification in extreme regions, while \cite{huet2024regressionextremeregions} extended these perspectives to a regression context. Our contribution diverges from these powerful approaches by focusing instead on modeling the probability of cascading extremal events.} 

A further distinction is that our focus differs from that of Hawkes processes \citep{laub2021} in a number of important ways. The main difference is that our approach should be understood as an extension of the notion of tail dependence coefficient---aiming at  quantifying conditional probabilities of extreme events---rather than an attempt to model self-excitation of processes.


The remainder of this paper is structured as follows. In Section~\ref{KAN}, we introduce the proposed methods. Section~\ref{extensions} elaborates on extensions. Section~\ref{simulation} assesses the performance of the proposed methods through a series of numerical experiments conducted on simulated data. Empirical illustrations are provided in Section~\ref{applications}. Technical details are provided in the Appendix, whereas the online supplementary materials includes additional computational experiments and details. 

\section{Neural Modeling of Cascading Extremes}\label{KAN}
\subsection{The POC (Probability of Cascade) Surface}
Let $I = \{I_u: u \in \mathbb{R}\}$ be a Bernoulli process (i.e., a random process {with  Bernoulli} marginal distributions), and let $Y
\sim F_Y$ be a continuous random variable. We start by introducing the following functional, referred to as alpha, which plays a central role in our developments:
\begin{equation}
  \label{alpha}
  \alpha \equiv \alpha_I =\lim_{u\to y^{*}}P(I_u=1 \mid \,Y>u),
\end{equation}
{given that the limit exists.} Here and below, $y^{*}=\sup\{y:F_Y(y)<1\}$ is the right endpoint of $F_Y$. 

Loosely, the parameter $\alpha \in [0, 1]$ in  \eqref{alpha} can be interpreted as the probability of a cascading event, where it quantifies the probability that a \textit{trigger event} (such as an earthquake exceeding magnitude $u$) would result in a \textit{follow-up event} (like a subsequent tsunami $I_u = 1$). The nature of the Bernoulli process $I$ defining the follow-up event opens up a variety of modeling possibilities as illustrated below. Particularly, the Bernoulli process may depend on a continuous variable $Z$.

\begin{example}[Tail dependence coefficient]\normalfont \label{tdc}
  If $I_u= I(Z > u)$, where $Y$ and $Z$ have common distribution, then 
\begin{equation}\label{chi}
   \alpha^{\text{TDC}} \equiv \alpha_I = \lim_{u \to y^*} P(Z > u \mid Y > u).
 \end{equation}
Thus, $\alpha$ in \eqref{alpha} includes the well-known tail dependence coefficient \citep[][Chapter~8]{coles2001} as a special case, specifically when the follow-up event involves $Z$ being extreme, with $Z$ being observable.  \strut \hfill $\square$
\end{example}

\begin{example}[Extremal probabilistic index]\label{epi}\normalfont
If the follow-up event is that $Z$ exceeds $Y$, then $I_u = I(Z > Y)$, and hence
\begin{equation}\label{epi2}
  \alpha^{\text{PI}} \equiv \alpha_I = \lim_{u \to y^*} P(Z > Y \mid Y > u),
\end{equation}which can be regarded as extremal version of the probabilistic index \citep{thas2012}. 
Say, from a reliability analysis viewpoint, the extremal probabilistic index in \eqref{epi2} represents the probability of strength ($Z$) to be larger than stress ($Y$), given that stress is extreme. \strut \hfill $\square$
\end{example}  
Our setup keeps in mind that for some applications the variable $Z$ in the above examples might be latent, but it assumes that the Bernoulli process $I$ is  observable. For instance, in the context of Example~\ref{tdc}, we assume occurrence, or not, of a tsunami ($I$) is observable, even if its intensity level ($Z$) might be unavailable, whereas in the context of Example~\ref{epi}, we assume that the collapse, or not, of a system ($I$) is observable, even if its strength ($Z$) is not observable.

{Some additional remarks on the existence of the limit in \eqref{alpha} are warranted. As noted by \cite{embrechts2016} in the context of the tail dependence coefficient, ``for virtually all copula models used in practice, the limit [...] exists''; this observation extends more generally to $\alpha$. For how to construct an instance in which the limit fails to exist, see \cite{kortschak2009}.}

In practice it is desirable to assess how the $\alpha$ functional may be impacted by a covariate or feature. 
The POC (Probability of Cascade) surface, to be introduced below, naturally extends \eqref{alpha} to this covariate-adjusted framework as follows; without loss of generality, we assume the features are scaled to the unit interval.

\begin{definition}[POC Surface]\label{def1} Let $\mathbf{x} = (x_1, \dots, x_d)^{T} \in [0, 1]^d$. The probability of cascade surface is defined as 
  \begin{equation}    \label{eq:alphat}
    \textsc{poc} = \{(\mathbf{x}, \alpha_{I}(\mathbf{x})): \mathbf{x} \in [0, 1]^d\}, \quad
    \alpha_{I}(\mathbf{x}) = \lim_{u \to y^*} P(I_{u, \mathbf{x}} = 1 \mid Y_{\mathbf{x}} > u),
  \end{equation}
  {given that the limit exists,} 
  where $I = \{I_{u, \mathbf{x}}: (u, \mathbf{x}) \in \mathbb{R} \times [0, 1]^d\}$ is a random field with Bernoulli marginal distributions and $\{Y_{\mathbf{x}}: \mathbf{x} \in [0, 1]^d\}$ is a random field.   
\end{definition}

\noindent In the case of a single covariate, we will refer to the POC surface as the POC curve. Although the POC surface is not invariant to continuous monotone transformations of the trigger event, it remains unchanged if the Bernoulli process threshold is adjusted accordingly.
\begin{proposition}\label{invar}
Let $h: \mathbb{R} \rightarrow \mathbb{R}$ be a strictly increasing continuous function and let
        $$
        \alpha_I^h(\mathbf{x})=\lim _{u \rightarrow h\left(y^*\right)} P\{I_{h^{-1}(u), \mathbf{x}}=1 \mid h(Y_{\mathbf{x}})>u\},
        $$
        {given that the limit exists.}  
        Then, $\alpha_I(\mathbf{x})=\alpha_I^h(\mathbf{x})$, for any $\mathbf{x} \in [0, 1]^d$.
\end{proposition}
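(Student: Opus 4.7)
The plan is to show that the conditional probability in the definition of $\alpha_I^h(\mathbf{x})$ coincides exactly with the one in the definition of $\alpha_I(\mathbf{x})$ once a natural change of variable is performed, after which the result follows by taking limits.

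First, I would introduce the substitution $v = h^{-1}(u)$, which is well-defined on the range of $h$ since $h$ is strictly increasing. Because strict monotonicity preserves strict inequalities, the event $\{h(Y_{\mathbf{x}}) > u\}$ equals the event $\{Y_{\mathbf{x}} > h^{-1}(u)\} = \{Y_{\mathbf{x}} > v\}$; simultaneously, the indicator appearing in the numerator satisfies $I_{h^{-1}(u), \mathbf{x}} = I_{v, \mathbf{x}}$ by definition. Combining these two observations, the pre-limit conditional probability in $\alpha_I^h(\mathbf{x})$ is literally equal to $P(I_{v, \mathbf{x}} = 1 \mid Y_{\mathbf{x}} > v)$, with no approximation incurred.

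The second step is to reconcile the limiting regimes. I would argue that as $u \to h(y^*)$ along the range of $h$, the corresponding value $v = h^{-1}(u)$ tends to $y^*$. This uses only that $h^{-1}$ is strictly increasing on the range of $h$ (and continuous at $h(y^*)$, or left-continuous if $y^* = +\infty$, in which case one interprets $h(y^*)$ as $\lim_{y \to y^*} h(y)$). Passing to the limit on both sides of the identity established in the first step then gives $\alpha_I^h(\mathbf{x}) = \alpha_I(\mathbf{x})$.

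I do not foresee a serious obstacle: the result is essentially a measurability/change-of-variables identity made possible by strict monotonicity of $h$. The only point requiring a little care is the handling of the case $y^* = +\infty$ (and, symmetrically, the possibility that $h$ has jumps so that $h(y^*)$ must be read as a one-sided limit), but this is purely notational and does not affect the argument.
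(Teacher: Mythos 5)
Your proposal is correct and follows essentially the same route as the paper's own proof: the substitution $v = h^{-1}(u)$, the identity of the events $\{h(Y_{\mathbf{x}})>u\}$ and $\{Y_{\mathbf{x}}>h^{-1}(u)\}$, and the observation that $u \to h(y^*)$ forces $v \to y^*$. Your additional remarks on the cases $y^*=+\infty$ and the one-sided reading of $h(y^*)$ are a welcome refinement the paper leaves implicit.
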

\begin{proof}      
First, note that $\{h(Y_{\mathbf{x}})>u\}=\{Y_{\mathbf{x}}>h^{-1}(u)\}$; also, by a change of variables $v = h^{-1}(u)$, it follows that if $u \to h(y^*)$, then $v \to y^*$. Thus, 
\begin{equation*}
\begin{split}
\alpha_I^h(\mathbf{x})&=
\lim _{u \rightarrow h(y^*)} P\{I_{\mathbf{x}, h^{-1}(u)}=1 \mid h(Y_{\mathbf{x}})>u\} \\ &= 
\lim _{u \rightarrow h(y^*)} P\{I_{\mathbf{x}, h^{-1}(u)}=1 \mid Y_{\mathbf{x}}>h^{-1}(u)\} \\ &= 
\lim _{v \rightarrow y^*} P\{I_{\mathbf{x}, v}=1 \mid Y_{\mathbf{x}}>v\} \\ &= \alpha_I(\mathbf{x}).
\end{split}
\end{equation*}
\end{proof}
{Next, we show that whenever $P(I_{\mathbf{x},u} = 1 \mid Y_{\mathbf{x}} > u)$ is smooth in $\mathbf{x}$, then, under mild conditions, this smoothness is inherited by the POC surface.}
{\begin{theorem}\label{conttheo}
Let $\mathbf{x} \in [0,1]^d$, and suppose $\{(I_{\mathbf{x},u}, Y_{\mathbf{x}}) : \mathbf{x} \in [0,1]^d,\; u > 0\}$ is a family of random vectors with realizations on $\{0, 1\} \times \mathbb{R}$. Assume:
\begin{enumerate}
    \item[(a)] The mapping $\mathbf{x} \mapsto P(I_{\mathbf{x},u} = 1 \mid Y_{\mathbf{x}} > u)$ is continuous on $[0,1]^d$, for all $u \in \mathbb{R}$;
    \item[(b)] The limit 
    $\lim_{u \to y^*} P(I_{\mathbf{x},u} = 1 \mid Y_{\mathbf{x}} > u)$ 
    exists and convergence is uniform in $\mathbf{x}$ over $[0,1]^d$.
\end{enumerate}
Then, the mapping $\mathbf{x} \mapsto \alpha(\mathbf{x}) \equiv \lim_{u \to y^*} P(I_{\mathbf{x},u} = 1 \mid Y_{\mathbf{x}} > u)$ is continuous on $[0,1]^d$.
\end{theorem}}

{\begin{proof}
Let $\mathbf{x}_0 \in [0,1]^d$ be arbitrary and fix $\varepsilon > 0$. By uniform convergence in (b), there exists $u_0$ such that for all $u \geq u_0$ and all $\mathbf{x} \in [0,1]^d$,
\begin{equation}\label{unifconv}
\left| P(I_{\mathbf{x},u} = 1 \mid Y_{\mathbf{x}} > u) - \alpha(\mathbf{x}) \right| < \frac{\varepsilon}{3}.
\end{equation}
Fix such a $u \geq u_0$. Then, for any $\mathbf{x}$ if follows that $|\alpha(\mathbf{x}) - \alpha(\mathbf{x}_0)| \leq A + B + C$, where
\begin{equation*}
    \begin{cases}
      A = |\alpha(\mathbf{x}) - P(I_{\mathbf{x},u} = 1 \mid Y_{\mathbf{x}} > u)|, \\
      B = |P(I_{\mathbf{x},u} = 1 \mid Y_{\mathbf{x}} > u) - P(I_{\mathbf{x}_0,u} = 1 \mid Y_{\mathbf{x}_0} > u)|,\\
      C = |P(I_{\mathbf{x}_0,u} = 1 \mid Y_{\mathbf{x}_0} > u) - \alpha(\mathbf{x}_0)|.
    \end{cases}
\end{equation*}
By (b), it follows that $A < \varepsilon / 3$ and $C < \varepsilon / 3$. For $B$, note that (a) implies 
that there exists $\delta > 0$, with $\|\textbf{x} - \textbf{x}_0\| < \delta$, such that $B < \varepsilon/3$. Combining all three terms gives
$|\alpha(\mathbf{x}) - \alpha(\mathbf{x}_0)| < \varepsilon.$ This proves the final result.
\end{proof}
All examples in Section~\ref{simulation} can be easily verified to satisfy the assumptions of Theorem~\ref{conttheo}.
}
\subsection{KANE: KAN with Natural Enforcement}
\subsubsection*{A Kolmogorov--Arnold Approach for Learning from Data}\label{KANE Bernoulli} Kolmogorov's superposition theorem suggests a direct approach to modeling the POC surface using a standard KAN framework, by specifying
\begin{equation}\label{starting}
  \alpha_{I}(\mathbf{x}) = \sum_{i=1}^{2 d + 1} {\Phi}_i^{(2)}\left(\sum_{j=1}^d {\Phi}_{i, j}^{(1)}(x_j)\right).
\end{equation}
While theoretically appealing, the formulation in \eqref{starting} does not ensure that in practice $\alpha_{I}(\mathbf{x})$  lies within its `natural' interval, $[0, 1]$. This arises because the true $\alpha_{I}(\mathbf{x})$ is unknown in practice, requiring the inner and outer functions to be modeled and inferred from the data.

A straightforward solution is to include a final activation function in the architecture to `enforce' this constraint. Let $g:\mathbb{R} \to [0, 1]$ be an activation function, and set
  \begin{equation}\label{kane}
    \alpha_{I}(\mathbf{x}) = {g}\left(\sum_{i=1}^{2 d + 1} {\Phi}_i^{(2)}\left(\sum_{j=1}^d {\Phi}_{i, j}^{(1)}(x_j)\right)\right).
  \end{equation}
  Natural candidates to enforce the unit interval constraint include, for example, the sigmoid activation function, 
  \begin{equation}\label{sig}
    g(x) = \frac{1}{1 + \exp(-x)}.
  \end{equation}
  To distinguish the canonical KAN from our approach of enforcing the unit interval constraint, we refer to \eqref{kane} as KANE (KAN with Natural Enforcement).\footnote{Naturally, for other contexts other range constraints can be similarly enforced using alternative activation functions (e.g., $g(x) > 0$ if the target of interest is positive), but a mapping to the unit interval is all we need for modeling the POC surface.} See Fig.~\ref{fig:architecture} for a schematic representation of the architecture of a three-layer KANE model. In function-matrix notation, KANE can be written as 
  \begin{equation*}
    \alpha_{I}(\mathbf{x}) = g\left(({\boldsymbol\Phi^{(2)}} \circ {\boldsymbol\Phi^{(1)}}) \, (\mathbf{x})\right),  
  \end{equation*}
  where 
  \begin{equation*}
  {\boldsymbol\Phi^{(2)}}=\begin{pmatrix}
    \Phi_{1}^{(2)} \\
    \vdots\\
    \Phi_{2d + 1}^{(2)}\\
  \end{pmatrix}^{T}, \qquad 
    {\boldsymbol\Phi^{(1)}}=\begin{pmatrix}
    \Phi_{1,1}^{(1)} &  \cdots & \Phi_{1,d}^{(1)} \\
    \vdots & \ddots & \vdots\\
    \Phi_{2d + 1,1}^{(1)} & \cdots & \Phi_{2d + 1,d}^{(1)}\\
    \end{pmatrix}.
  \end{equation*}
We use the following terminology for the layers of KANE: 
  $\boldsymbol\Phi^{(1)}$ refers to the \textit{input layer}, $\boldsymbol\Phi^{(2)}$ to the \textit{pre-output layer}, whereas $g$ represents the \textit{$g$-layer} which is responsible for transforming the pre-output into the unit interval. 

\begin{figure}
    \centering
  \textbf{Input layer} \hspace*{1.9cm} \textbf{Pre-output layer} \hspace*{2.0cm} \textbf{$g$-layer} \includegraphics[width=0.9\linewidth]{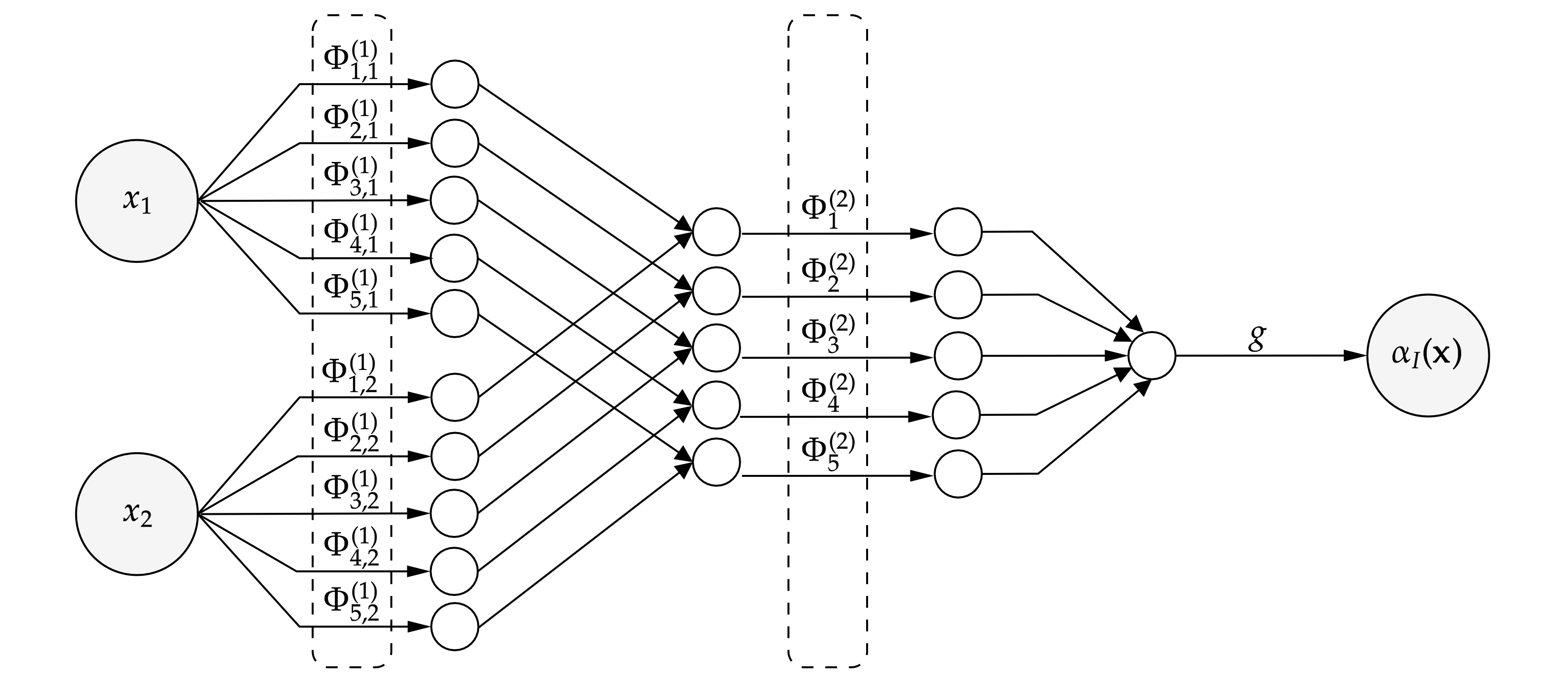}
    \caption{Architecture of three-layer KANE model for the POC surface (binary follow-up event) illustrated on two-feature setting.}
    \label{fig:architecture}
\end{figure}

{A deep version of the proposed framework is outlined in Appendix~B. Motivated however by Theorem~\ref{Kthem}, we mainly focus on the canonical 3-layer KANE model throughout.}

\subsubsection*{Modeling via Splines}
For convenience, in what follows we denote the outer function as $\Phi_i^{(2)}$ or $\Phi_{1, i}^{(2)}$ depending on what is typographically convenient, a choice whose justification will become apparent once the spline formulation and the deep version of the model are introduced. 

Consider $m + 1$ equally-spaced knots, $t_0 < \cdots < t_m$. We model the inner and outer functions as a linear combination of B-spline basis functions, that is, for $l = 1, 2$, 
\begin{equation*}\label{hi}
  \Phi_{i, j}^{(l)}(x) = \sum_{k = 1}^K \beta_{i, j, k}^{(l)} B_k^{\,p}(x), \qquad
\end{equation*}
where  $B_k^{\,p}(x)$ is the B-spline basis function of degree $p$ evaluated at $x$ and $K=p+m$. For $l = 2$ it follows that $i \in \{1, \dots, 2d + 1\}$ and $j \in \{1, \dots, d\}$, whereas for $l = 1$ it holds that $i \in \{1\}$ and $j \in \{1, \dots, 2d + 1\}$. To streamline the notation, we assume the same number of basis functions $K$ per univariate function and the same set of knots, a choice made for convenience that can be adjusted at the cost of more complex notation. The parameter of interest is given by the $(2d + 1) \times (d + 1) \times K$ partitioned tensor $$\boldsymbol{\beta} = \big(\boldsymbol{\beta}^{(1)}_k \mid \boldsymbol{\beta}^{(2)}_k\big),$$ where for $k = 1, \dots, K$, 
\begin{equation*}
  \boldsymbol\beta^{(1)}_k = 
\begin{pmatrix}
\beta_{1,1, k}^{(1)} &  \cdots & \beta_{1,d, k}^{(1)} \\
\vdots & \ddots & \vdots \\
\beta_{2d+1,1, k}^{(1)} & \cdots & \beta_{2d+1,d, k}^{(1)}
\end{pmatrix}
, \qquad
\boldsymbol\beta^{(2)}_k =
\begin{pmatrix}
\beta^{(2)}_{1, 1, k} \\
\vdots \\
\beta^{(2)}_{2d+1, 1, k}
\end{pmatrix}.
\end{equation*}
To learn about the POC surface from data---that is, to infer the tensor $\boldsymbol{\beta}$---we base our inferences on
\begin{equation*}
  D_n = \{(\delta_{i}, \mathbf{x}_i): y_i > u\}_{i = 1}^n,   
\end{equation*}
for a sufficiently large threshold $u$. Here, $\delta_{i}, y_i, \mathbf{x}_i$ are respectively realizations from $I_{u,\mathbf{x}}$ and $Y_{\mathbf{x}}$, given $\mathbf{X} = \mathbf{x}$. The tensor $\boldsymbol{\beta}$ is estimated by minimizing the binary cross-entropy loss, 
\begin{equation*}    L(\boldsymbol{\beta})=-\frac{1}{n}\sum_{i=1}^{n}\{\delta_i\log(\alpha_I(\mathbf{x}_i))+(1-\delta_i)\log(1-\alpha_I(\mathbf{x}_i))\}.
\end{equation*}
The minimizer of $L$, denoted as $\hat{\boldsymbol{\beta}}$, is used to obtain the KANE estimate of the POC surface, \( \hat{\alpha}_I(\mathbf{x}_i) \), as a plug-in estimator in (\ref{kane}). While the asymptotic normality of the resulting estimator remains an open question for future research, the supplementary materials provide numerical evidence that confidence intervals obtained via a resampling cases bootstrap seem to perform reasonably well, with coverage close to the nominal level.


\subsection{Model Checking and Diagnostics}\label{diagnostics}

For evaluating the quality of the fit from the KANE model, we resort to randomized quantile residuals introduced by \cite{dunn1996}. Before we present the version of these residuals for our model, we recall that their basic principle is as follows: For any nonparametric or parametric model, based on a continuous distribution function, $F(x) = P(X \leq x)$, it follows that $R \equiv  \Phi^{-1}(F(X)) \sim N(0, 1)$, when $X \sim F$, where $\Phi^{-1}$ is the standard normal quantile function. Hence, provided that $F$ is a sensible model for the data $X$, then the residuals $R$ {are normally} distributed. As noted by \citeauthor{dunn1996}, the same principle can be easily extended to a discrete model $F$ via jittering. Since for each fixed $u$, our POC surface formulation can be understood as binary response regression model based on $D_n$, it follows that its Dunn--Smyth residuals are defined as 
\begin{equation*}
  R_i =
  \begin{cases}
    \Phi^{-1}(W_i), & \delta_i = 1, \\ 
    \Phi^{-1}(V_i), & \delta_i = 0,
  \end{cases} \quad \text{with } \quad   V_i \sim \text{Unif}(0, 1 - \hat\alpha(\mathbf{x}_i)), \qquad
  W_i \sim \text{Unif}(1 - \hat\alpha(\mathbf{x}_i), 1). 
\end{equation*}
Informally, the $R_i$ are a jittered version of the `binary response' (i.e., follow-up event $\delta_i$), with the level of jittering being controlled by the `probability of success' (i.e., probability of cascade $\hat\alpha(\mathbf{x}_i)$). 
We recommend reporting at least 10 trajectories of the randomized quantile residuals, noting that \cite{dunn1996} suggested that ``four realizations of the quantile residual'' would already be sufficient. These realizations will be presented in a QQ-boxplot along with the confidence bands stemming from the normal reference model (that keeps in mind that the $R_i$ are normally distributed provided that the fit is sensible); the QQ-boxplot was recently introduced by \cite{rodu2021}, and consists of a graphical tool that merges into a single chart a boxplot and a QQ-plot.



\section{Consequences and Extensions}\label{extensions}
\subsection{Multi-Trigger Systems}
In practice, multiple competing incidents can result in the follow-up event. To address this, we define a multi-trigger system where we consider $\mathcal{Y}_1, \dots, \mathcal{Y}_{\mathcal{K} - 1}$ as a sequence of identically distributed random fields with 
$$\mathcal{Y}_k = \{Y_{k, \mathbf{x}}: \mathbf{x} \in [0, 1]^d\}, \qquad k = 1, \dots, \mathcal{K}.$$
Our framework extends to $K$ trigger events by considering
\begin{equation}
  \label{alphamulti}
  \alpha(\mathbf{x}) =\lim_{u\to y^{*}}P(I_{u, \mathbf{x}}=1 \mid \,Y_{1, \mathbf{x}}>u \vee \cdots \vee Y_{\mathcal{K}, \mathbf{x}} > u).
\end{equation}
This formula simplifies to the single-trigger case in \eqref{eq:alphat} by defining $Y_{\mathbf{x}} = \min\{Y_{1, \mathbf{x}}, \dots, Y_{\mathcal{K}, \mathbf{x}}\}$, and hence the theory and methods from Section~\ref{KAN} readily apply to this context as well. The identically distributed assumption for $\mathcal{Y}_1, \dots, \mathcal{Y}_{\mathcal{K}}$ is for convenience, as in practice random processes can be normalized to a common scale (say, unit Fr\'echet margins) using the probability integral transform.  

\subsection{Categorical, Ordinal, and Continuous Follow-up Events}\label{sec:ext2}
In real-world applications, follow-up extreme events can take various forms, including multicategorical, ordinal, and continuous outcomes. As examples of categorical and ordinal events: For instance, $j = 1$ may represent `no tornado,' $j = 2$ a `supercell tornado', and $j = 3$ a `non-supercell tornado'. Similarly, $j = 1$ to $j = 5$ may represent Category 1 to Category 5 hurricanes.

In the multicategorical context, we model the follow-up event as a multinoulli process, that is, $$\mathbf{I} = \{\mathbf{I}_{u, \mathbf{x}}: (u, \mathbf{x}) \in \mathbb{R} \times [0, 1]^d\},
$$
is a random field with multinoulli marginal distribution. Here, $\mathbf{I}_{u, \mathbf{x}} = (I_{u, \mathbf{x}}^{(1)}, \dots, I_{u, \mathbf{x}}^{(J)})^{T} \in \{0, 1\}^J$ and $J$ is the number of categories; see \citet[][Section~2.3.2]{murphy2012} for basics on the  multinoulli distribution. The proposed cascading probability extends to this context as follows.

\begin{definition}[Multicategorical POC Surfaces] Let $\mathbf{x} = (x_1, \dots, x_d)^{T} \in [0, 1]^d$ and $J \in \mathbb{N}$. The $j$th-category probability of cascade surface is defined as 
  \begin{equation}    \label{alpha2}
    \textsc{poc}_j = \{(\mathbf{x}, \alpha_{\mathbf{I}}(\mathbf{x})^{(j)}): \mathbf{x} \in [0, 1]^d\}, \quad
    \alpha_{\mathbf{I}}^{(j)}(\mathbf{x}) = \lim_{u \to y^*} P(\mathbf{I}_{u, \mathbf{x}}^{(j)} = 1 \mid Y_{\mathbf{x}} > u),
  \end{equation}
  for $j \in \{1, \dots, J\}$, {given that the limit exists}, where $\mathbf{I} = \{\mathbf{I}_{u, \mathbf{x}}: (u, \mathbf{x}) \in \mathbb{R} \times [0, 1]^d\}$ is a multinoulli process and $\{Y_{\mathbf{x}}: \mathbf{x} \in [0, 1]^d\}$ is a random field.   
\end{definition}

In terms of fitting the model, 
two adjustments are needed for the architecture discussed in Section~\ref{KAN}. Firstly, we need an activation function for the $g$-layer that maps the Euclidean space to the unit simplex; that is, $\textbf{g} = (g^{(1)}, \dots, g^{(J)})$, where $g^{(j)}: \mathbb{R}^d \to [0, 1]$, such that $\sum_{j = 1}^J g^{(j)}(\mathbf{x}) = 1$. A natural candidate for this task is the softmax activation function, in which case the $g$-layer becomes, 
\begin{equation}\label{soft}
  g^{(j)}(\mathbf{x}) = \frac{\exp(x_j)}{\sum_{i = 1}^J \exp({x_i})}.
\end{equation}
Secondly, the loss function must account for the different categories, for which the cross-entropy loss is employed
\begin{equation*}    L(\boldsymbol{\beta})=-\frac{1}{n}\sum_{i=1}^{n}\sum_{j=1}^{J}\{\delta_{i,j}\log(\alpha_{\mathbf{I}}^{(j)}(\mathbf{x}_i))+(1-\delta_{i,j})\log(1-\alpha_{\mathbf{I}}^{(j)}(\mathbf{x}_i))\},
  \end{equation*}
  where $(\delta_{i, 1}, \dots, \delta_{i, J})^{T}$ is a realization of $\mathbf{I}_{u, \mathbf{x}_i}$.


  Relatedly, the follow-up extremal event may have a natural ordered category. The formal definition of the POC surface for this case is tantamount to the one provided in \eqref{alpha2}. When fitting the model, ordinal events must be handled carefully, avoiding the mistake of treating categories as nominal quantities. For example, a na\"ive approach would be to apply the aforementioned multicategorical framework with a softmax activation function, but this would ignore the ordinal structure in the data. Instead, a simple alternative to address this issue is by devising a version of the well-known approach by \cite{frank2001} to our setting; see also \cite{alcacer2024}. 
  
The Frank–Hall approach begins by transforming a $J$-class ordinal problem into $J-1$ binary classification problems. Specifically, an ordinal attribute $A$ with ordered values $C_1, \dots, C_J$ is converted into $J-1$ binary attributes of the form $A > C_j$, for $j = 1, \dots, J-1$; for example, on the above-mentioned $J = 5$ hurricane problem, this would imply converting the original ordinal problem into four binary attributes based on 
$$ A > \text{Category}~1, \quad A > \text{Category}~2, \quad  A > \text{Category}~3, \quad  A > \text{Category}~4.$$  
Hence, the probabilities for each category for the Frank--Hall approach are given by $P(C_1) = 1 - p_1$, 
\begin{equation*}
    P(C_j) = p_{j-1} - p_j, \quad j=2,\ldots,J-1, 
\end{equation*}
and $P(C_J) = p_{J-1}$, where $p_j = P(A > C_j)$, for all $j$.

Based on this, the ordinal multicategorical POC surface can be derived. Let $\mathbf{A}_{u, \mathbf{x}}$ be an ordinal attribute, and define $\pi_j(\mathbf{x}) = \lim_{u \to y^*} P(\mathbf{A}_{u, \mathbf{x}} > C_1 \mid Y_{\mathbf{x}} > u)$. The Frank--Hall POC surface is then given by  $\alpha_{\mathbf{I}}^{(1)}(\mathbf{x}) = 1 - \pi_j(\mathbf{x})$, 
\begin{equation*}
\alpha_{\mathbf{I}}^{(j)}(\mathbf{x}) = \pi_{j - 1}(\mathbf{x}) - \pi_j(\mathbf{x}), \quad j=2,\ldots,J-1,
\end{equation*}
and $\alpha_{\mathbf{I}}^{(J)}(\mathbf{x}) = \pi_{J - 1}(\mathbf{x})$.

Finally, for a continuous-type follow-up event, if a continuous variable $Z_{\mathbf{x}}$ is observable---similar to Example~\ref{tdc}---we can set $I_{u, x} = I(Z_{\mathbf{x}} > u)$ in which case we obtain the conditional coefficient of extremal dependence by \cite{lee2024}.

\section{Numerical Experiments on Artificial Data}\label{simulation}
\subsection{Artificial Data Generating Processes and Preliminary Experiments}\label{exp}

This section presents the simulation scenarios used to evaluate the proposed methods, along with a series of single-sample experiments. We assess performance in three settings by simulating $n=10\,000$ observations per scenario, with artificial data generated from the Bernoulli process, follow-up events, and features as described below. In each case, we threshold the response at its 95\% quantile ($u$), hence retaining only those values where $Y_x > u$.

\vspace{0.2cm}
\noindent \textbf{Scenarios A1 and A2}: For these scenarios, the follow-up events are respectively simulated from Bernoulli processes with 
\begin{equation*}
    I_{u, x} \mid \, Y_{x} > u \sim \text{Bern}\{m_{\text{A}k}(x; u)\},\quad k=1,2, 
\end{equation*}
for $x \in [0, 1]$. For Scenario~A1, $m_{\text{A1}}(x; u)=\Phi(x, \exp(-u), 1)$ is the distribution function of the normal distribution with mean $\exp(-u)$ and variance 1, whereas for Scenario~A2, $m_{\text{A2}}(x; u)=0.2\sin\{3\pi(x-1)^2\} + 0.4 + 1/u^2$. 

\noindent In both cases, we simulate a single feature $X$ from the standard uniform distribution, and the trigger event $Y_x \mid X = x$ is generated from a unit Fréchet distribution. The true POC curves corresponding to $m_{\text{A1}}(x; u)$ and $m_{\text{A2}}(x; u)$ are respectively given by 
\begin{equation*}
  \begin{cases}
  \alpha^{\text{A1}}_I(x) = \lim_{u \to \infty} m_{\text{A1}}(x; u) = \Phi(x, 0, 1), \\
    \alpha^{\text{A2}}_I(x) = \lim_{u \to \infty} m_{\text{A2}}(x; u) = 0.2\sin\{3\pi(x-1)^2\}+0.4.
  \end{cases}
\end{equation*}

\vspace{0.2cm}
\noindent \textbf{Scenarios B1 and B2}: For these scenarios, the follow-up events are respectively simulated from a Bernoulli processes with 
\begin{equation*}
    I_{u, \mathbf{x}} \mid \, Y_{\mathbf{x}} > u\sim \text{Bern}\{m_{\text{Bk}}(\mathbf{x}; u)\},\quad k=1,2,
\end{equation*}
for $\mathbf{x} = (x_1, x_2) \in [0, 1]^2$. For Scenario~B1, $m_{\text{B1}}(\mathbf{x}; u) = \Phi(\mathbf{x}; \exp(-u)\mathbf{1}_2, \boldsymbol{I}_2)$ is the bivariate distribution function of the bivariate normal distribution, whereas for Scenario~B2, $m_{\text{B2}}(\mathbf{x}; u) = 0.4 \exp(-x_1)\cos\{2\pi x_2\} + 0.5 + 1 / u^2$.
In both cases, we simulate two independent features, $X_1$ and $X_2$, which are drawn from the standard uniform distribution; the trigger event  $Y_{\mathbf{x}} \mid (X_1, X_2) = (x_1, x_2)$ is again simulated from a unit Fréchet distribution. The true POC surfaces corresponding to $m_{\text{B1}}(x; u)$ and $m_{\text{B2}}(x; u)$ are respectively given by 
\begin{equation*}
  \begin{cases}
    \alpha^{\text{B1}}_I(\mathbf{x}) = \lim_{u \to \infty} m_{\text{B1}}(\mathbf{x}; u) = \Phi(\mathbf{x}, 0, \boldsymbol{I}_2), \\ 
    \alpha^{\text{B2}}_I(\mathbf{x}) = \lim_{u \to \infty} m_{\text{B2}}(\mathbf{x}; u) = 0.4\exp(-x_1)\cos\{2\pi x_2\} + 0.5.
  \end{cases}
\end{equation*}

\textbf{Scenario C}: The final scenario examines a multi-category framework, as described in Section~\ref{sec:ext2}. The follow-up events are simulated from a multinoulli process with 
\begin{equation*}
  \mathbf{I}_{u,\mathbf{x}} \mid Y_{\mathbf{x}}>u \sim \text{Multi}_3\left(\frac{m_{\text{C1}}(\mathbf{x}; u)}{M_u},\frac{m_{\text{C2}}(\mathbf{x}; u)}{M_u}, \frac{m_{{\text{C3}}}(\mathbf{x}; u)}{M_u}\right),   
\end{equation*}
for $\mathbf{x} = (x_1, x_2)^{\top} \in [0, 1]^2$. Here, $M_u = m_{\text{C1}}(\mathbf{x}; u) + m_{\text{C2}}(\mathbf{x}; u) + m_{\text{C3}}(\mathbf{x}; u),$
where $m_{\text{C1}}(\mathbf{x}; u) = \Phi(\mathbf{x}; \exp(-u)\mathbf{1}_2, \boldsymbol{I}_2)$, $m_{\text{C2}}(\mathbf{x}; u) = 0.4\exp(-x_1)\cos(2\pi x_2) + 0.5 + 1 / u^2$, and $m_{\text{C3}}(\mathbf{x}; u) = \linebreak 0.8 x_2\sin^2(\pi x_1) + 1 / u^2$. The trigger event and features follow the same settings as in Scenarios B1 and B2 (i.e., the trigger is unit Fréchet-distributed and features are independently drawn from the standard uniform distribution). The true $j$th-category POC surfaces in this case are 
\begin{equation*}
  \begin{cases}
  \begin{aligned}
    \alpha^{(1)}_{\mathbf{I}}(\mathbf{x}) &= \lim_{u \to \infty} \frac{m_{\text{C1}}(\mathbf{x}; u)}{M_u} = \frac{\Phi(\mathbf{x}; \boldsymbol{0}, \boldsymbol{I}_2)}{M}, \\ 
    \alpha^{(2)}_{\mathbf{I}}(\mathbf{x}) &= \lim_{u \to \infty} \frac{m_{\text{C2}}(\mathbf{x}; u)}{M_u} = \frac{0.4\exp(-x_1)\cos\{2\pi x_2\} + 0.5}{M}, \\ 
    \alpha^{(3)}_{\mathbf{I}}(\mathbf{x}) &= \lim_{u \to \infty} \frac{m_{\text{C3}}(\mathbf{x}; u)}{M_u} = \frac{0.8 x_2\sin^2(\pi x_1)}{M}, \\ 
  \end{aligned}
  \end{cases}
\end{equation*}
where
\begin{equation*}
  M = \Phi(\mathbf{x}; \boldsymbol{0}, \boldsymbol{I}_2) + 0.4\exp(-x_1)\cos\{2\pi x_2\} + 0.8 x_2\sin^2(\pi x_1) + 0.5.
\end{equation*}

The results from a single-sample experiment are shown in Fig.~\ref{fig:ABC1}. As can be seen from the latter figures, the proposed methods satisfactorily recover the true POC in all scenarios; these findings are provisional, since they are based on a single-sample experiment; a Monte Carlo evaluation follows in Section~\ref{mcstudy}.

Some final comments on computing and implementations are in order. The neural model was fitted using the LBFGS optimizer \citep[][Chapter~5]{pytlak2008}. For binary classification scenarios, the sigmoid activation function \eqref{sig} was specified for the $g$-layer to ensure that the output values remain within the unit interval, representing valid probabilities. In cases of multi-category classification, the softmax activation function in \eqref{soft} was used. The optimization process was carried out for 100 steps for each scenario. {A key consideration in model fitting was the selection of $p$ and $m$; for the first decision, we relied on the properties of cubic splines and chose $p = 3$, as cubic splines are continuous and have continuous first and second derivatives. The number of knots was set to $m+1=3$, with two knots placed at the boundaries and one interior knot to provide flexibility; this represents the minimal configuration that yielded satisfactory performance in our experiments. 

\subsection{Monte Carlo Evidence}\label{mcstudy}

This section reports the main findings of a Monte Carlo simulation study, where we repeat $500$ times the single-sample experiment from the previous section. We consider three sample sizes $n = 5000, 10\,000,$ and $15\,000$. Figure~\ref{fig:ABCMC} shows the Monte Carlo means of the simulation study, indicating strong performance across different scenarios, as anticipated by Fig.~\ref{fig:ABC1}. 

\begin{table}[H]
    \renewcommand{\arraystretch}{1.5}
    \centering
    \caption{MISE values across scenarios and sample sizes.}
    \label{misetab}
    \begin{tabular}{V{2}c|cccV{2}}
        \hlineB{2}
        \multicolumn{1}{V{2}c|}{\textbf{Scenario}} & \multicolumn{3}{cV{2}}{\textbf{Sample Size}} \\ 
        & $5\,000$ & $10\,000$ & $15\,000$ \\ 
        \hlineB{2}
        \textbf{A1} & $9.112\times10^{-6}$ & $3.706\times10^{-6}$ & $7.759\times10^{-7}$ \\ 
        \textbf{A2} & $1.501\times10^{-4}$ & $1.541\times10^{-4}$ & $1.478\times10^{-4}$ \\ 
        \textbf{B1} & $1.021\times10^{-4}$ & $8.014\times10^{-5}$ & $7.722\times10^{-5}$ \\ 
        \textbf{B2} & $2.669\times10^{-3}$ & $2.626\times10^{-3}$ & $2.625\times10^{-3}$ \\ 
        \textbf{C ($j=1$)} & $9.465\times10^{-4}$ & $9.055\times10^{-4}$ & $8.958\times10^{-4}$ \\ 
        \textbf{C ($j=2$)} & $1.166\times10^{-3}$ & $1.130\times10^{-3}$ & $1.106\times10^{-3}$ \\ 
        \textbf{C ($j=3$)} & $1.725\times10^{-4}$ & $1.185\times10^{-4}$ & $9.428\times10^{-5}$ \\ 
        \hlineB{2}
    \end{tabular}
\end{table}

Finally, in Table~\ref{misetab} we report the Monte Carlo MISE (Mean Integrated Squared Error), where
\begin{equation*}
\text{MISE} = \int_{[0, 1]^d} \{\hat \alpha_I(\mathbf{x}) - \alpha_I(\mathbf{x})\}^2 \, \text{d} \mathbf{x}.
\end{equation*} 
Here, $\hat \alpha_I(\mathbf{x})$ is the KANE estimate of POC and $[0, 1]^d$ is the space of the features (i.e., $d = 1$ for Scenarios A1--A2, and $d = 2$ for Scenarios B1--B2 and C). As can be seen from Table~\ref{misetab}, the proposed approach has appealing frequentist properties from a numerical viewpoint, in the sense that as the sample size increases, MISE decreases. 

In the supplementary materials, we document a battery of additional Monte Carlo experiments, including analyses of dependent features, an evaluation of the deep model, {higher-dimensional features}, and a comparison of the Frank--Hall ordinal model with the softmax approach. Among other findings, these additional results further validate the approach’s performance, show that performance is better in data-rich regions of the covariate space, and that the canonical three-layer model may perform as well as, if not better than, its deep counterpart, {which also requires careful selection of the number of layers to avoid overfitting}.

Moreover, the canonical three-layer KANE is computationally efficient and stable, consistently converging in under one second per scenario on a standard laptop (Apple M3) for $p=2$. However, as noted in the supplementary materials, increasing the feature dimension to $p>5$ leads to a substantial rise in computational time. Finally, we note that estimators for the POC surface can also be constructed using Generalized Additive Models (GAM) and Multi-Layer Perceptrons (MLP). Numerical evidence in the supplementary materials suggests that KANE and MLP yield comparable POC surface estimates, with both outperforming GAM.

\begin{minipage}[t]{0.45\textwidth}
    \begin{figure}[H]
    \centering
    \includegraphics[width=0.95\linewidth]{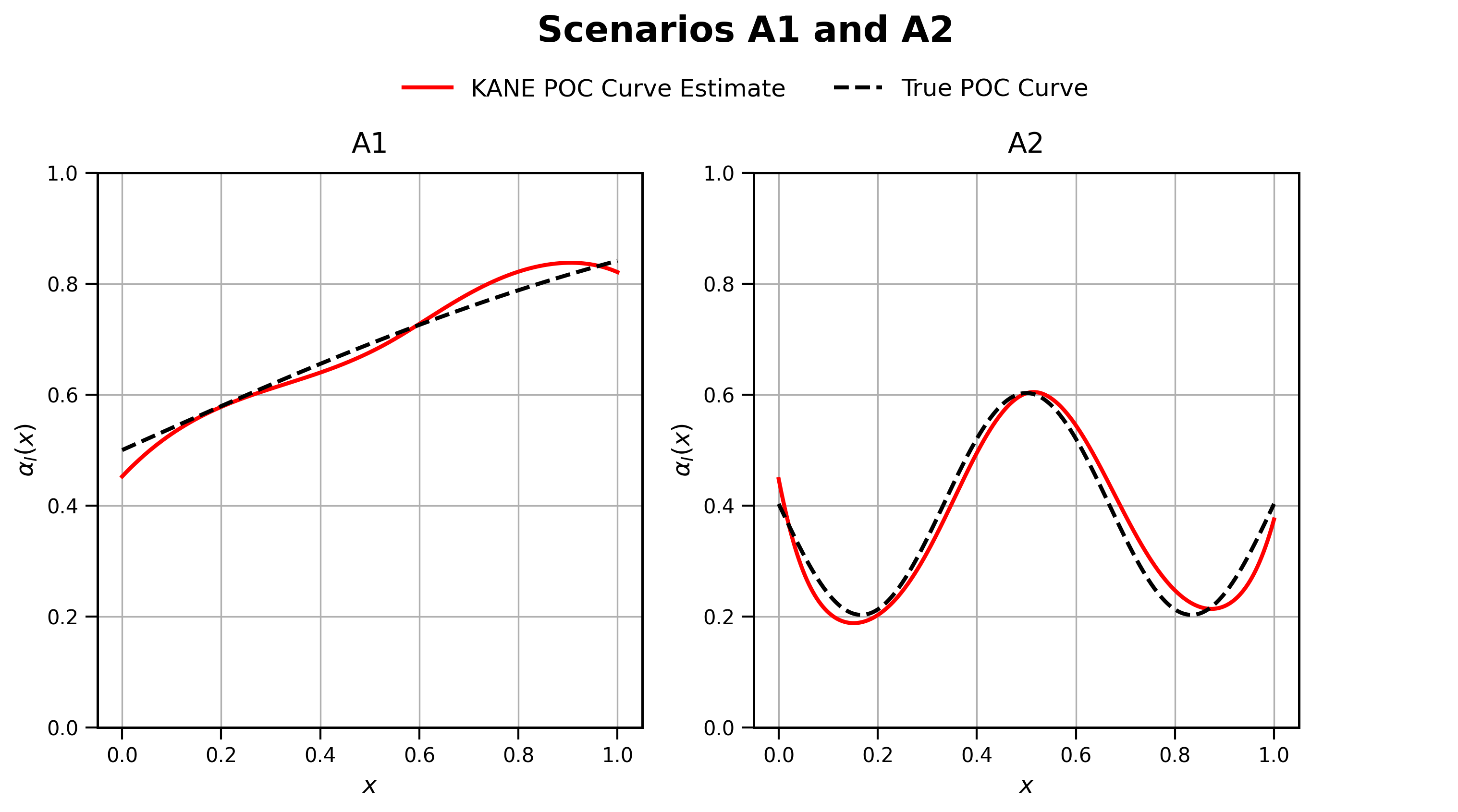}\\
    \includegraphics[width=0.95\linewidth]{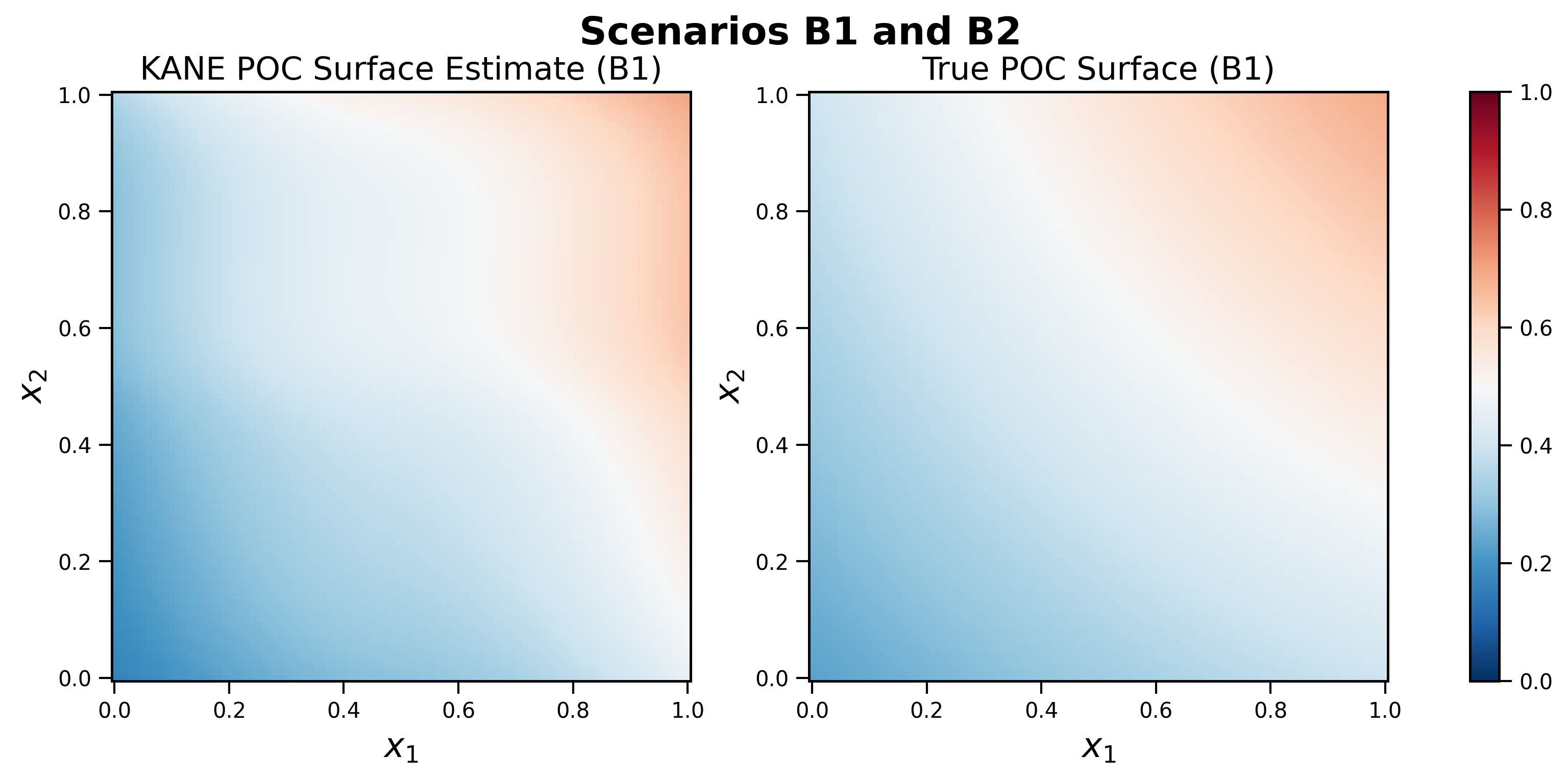}\\
    \includegraphics[width=0.95\linewidth]{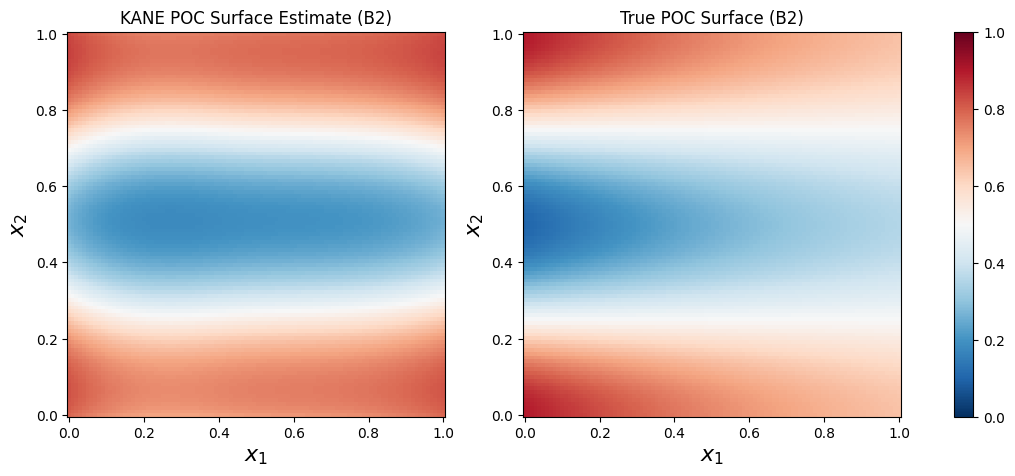}\\
    \includegraphics[width=0.95\linewidth]{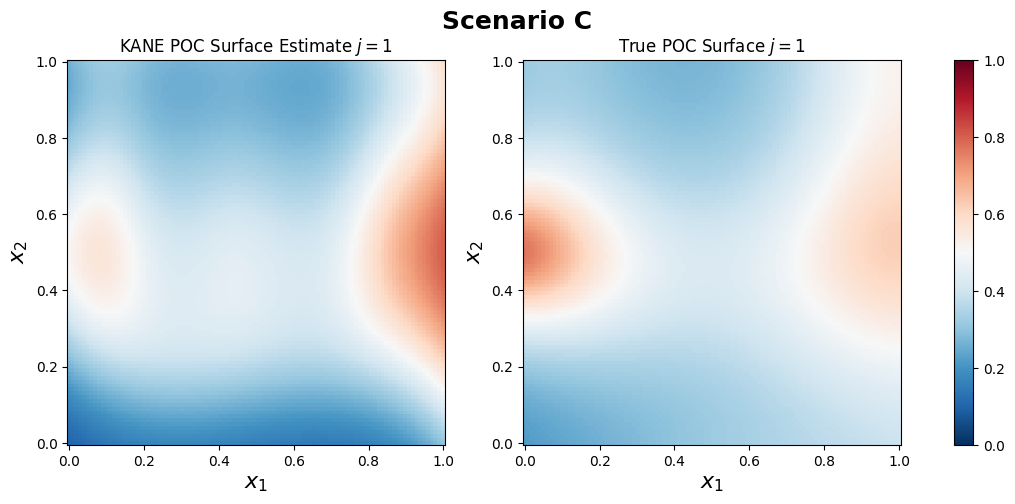}\\
    \includegraphics[width=0.95\linewidth]{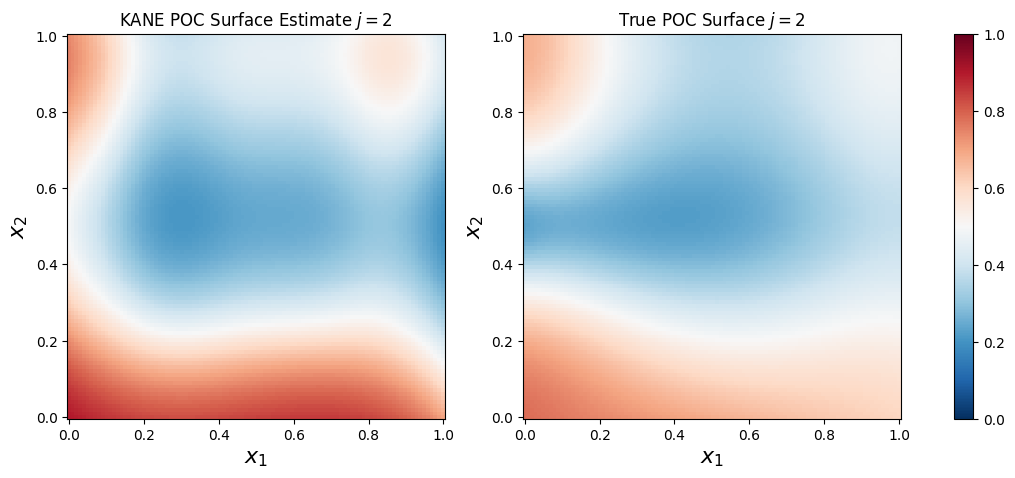}\\
    \includegraphics[width=0.95\linewidth]{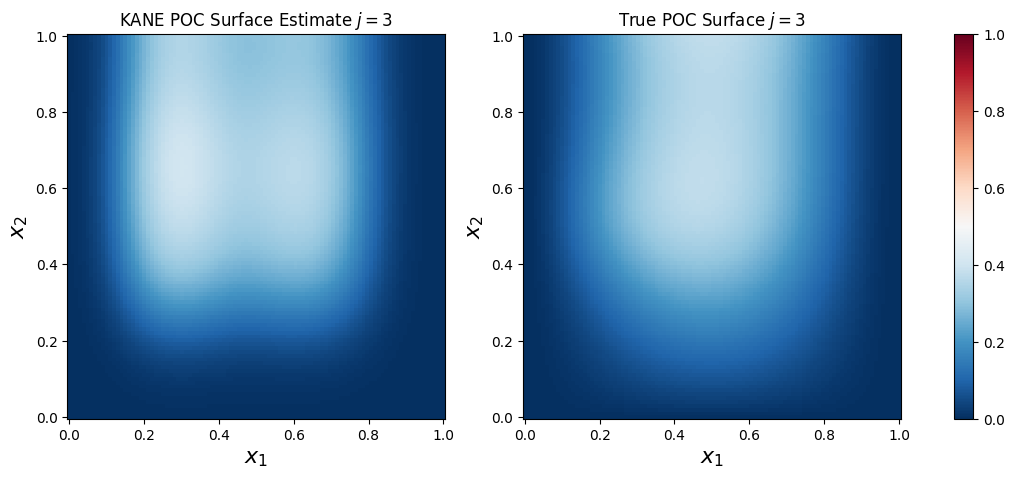}    
    \caption{Single-sample experiments for Scenarios~A, B, and C.}
    \label{fig:ABC1}
\end{figure}
\end{minipage}\hspace{1cm}  
\begin{minipage}[t]{0.45\textwidth}
    \begin{figure}[H]
    \centering
    \includegraphics[width=0.95\linewidth]{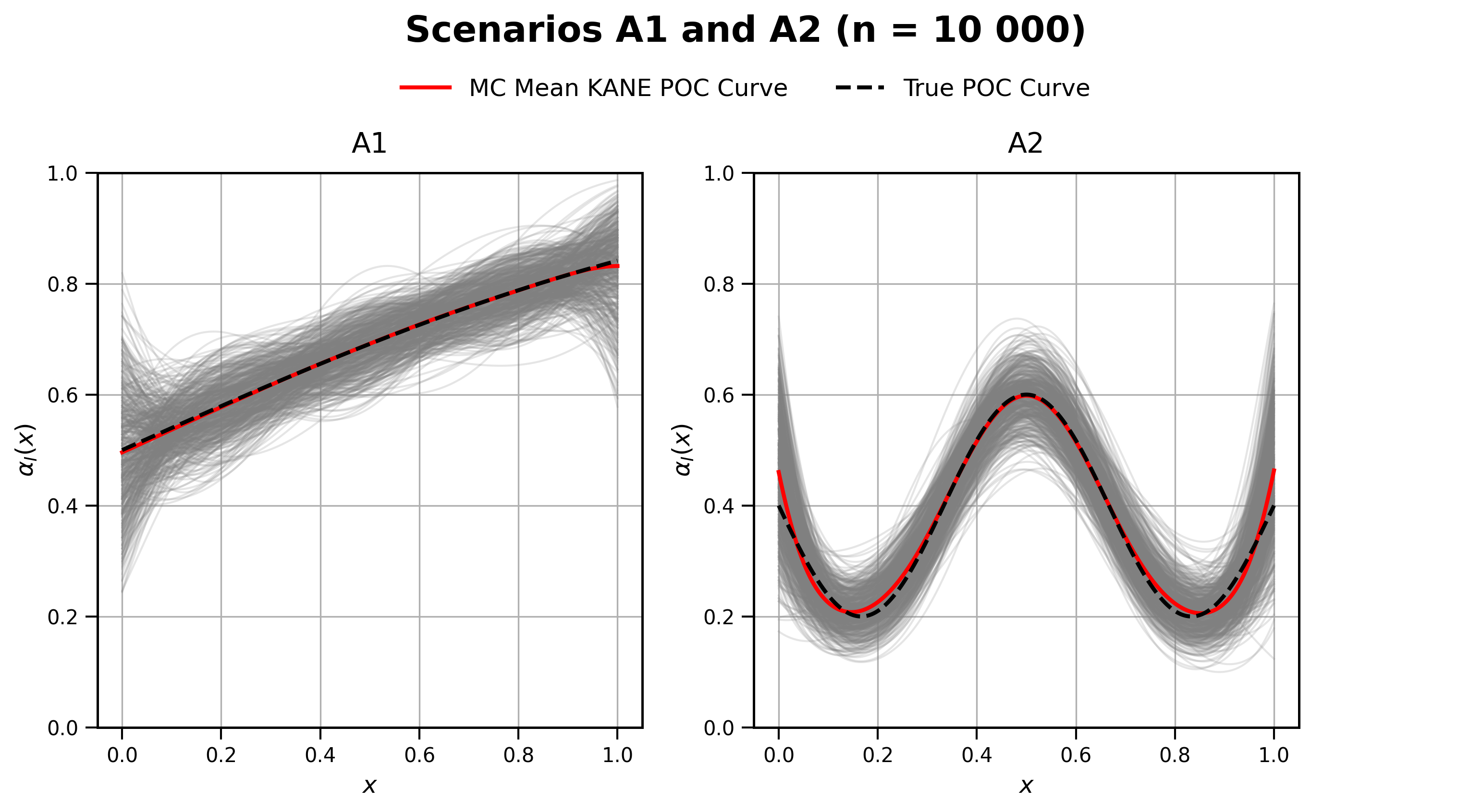}\\
    \includegraphics[width=0.95\linewidth]{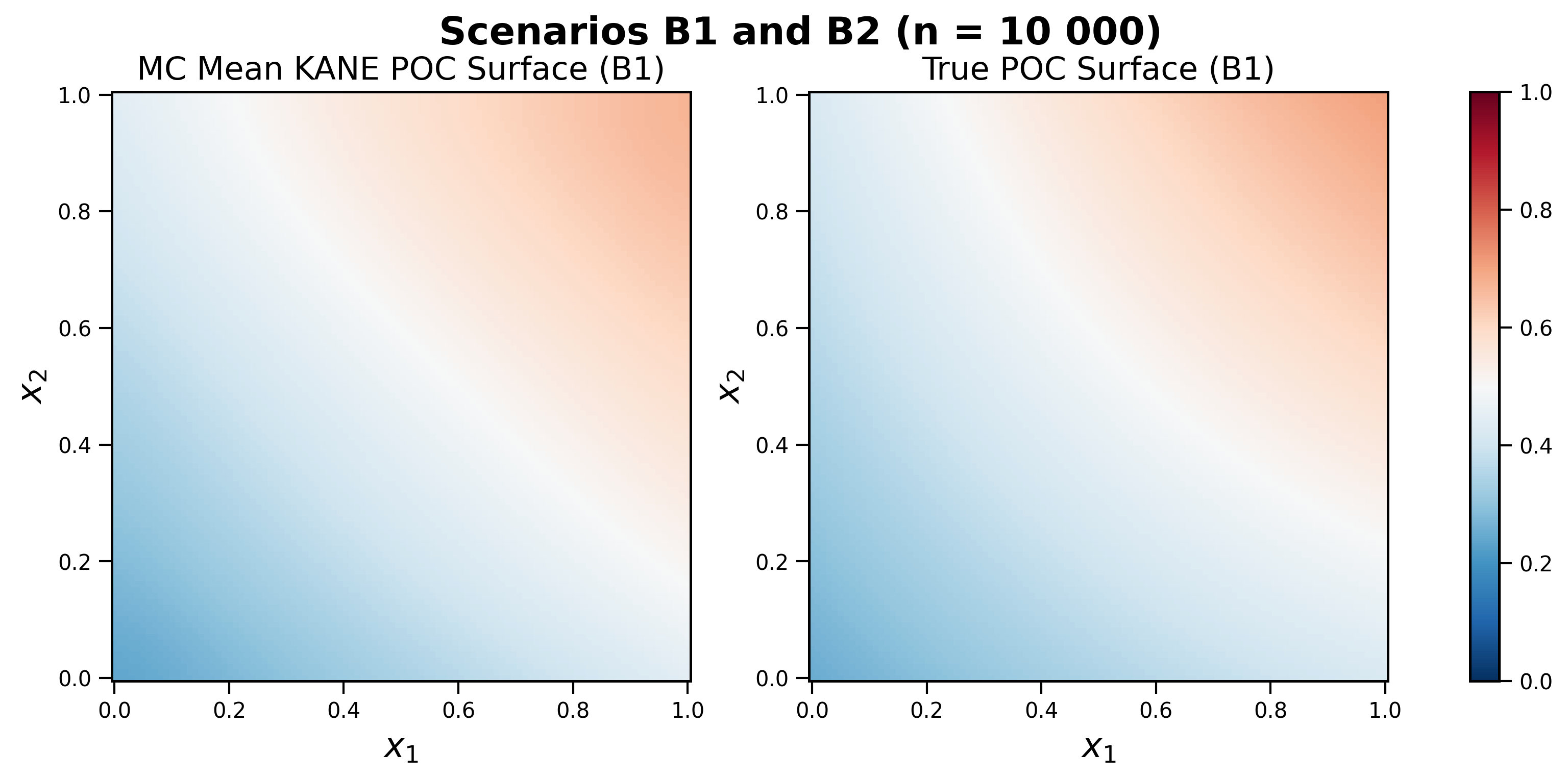}\\
    \includegraphics[width=0.95\linewidth]{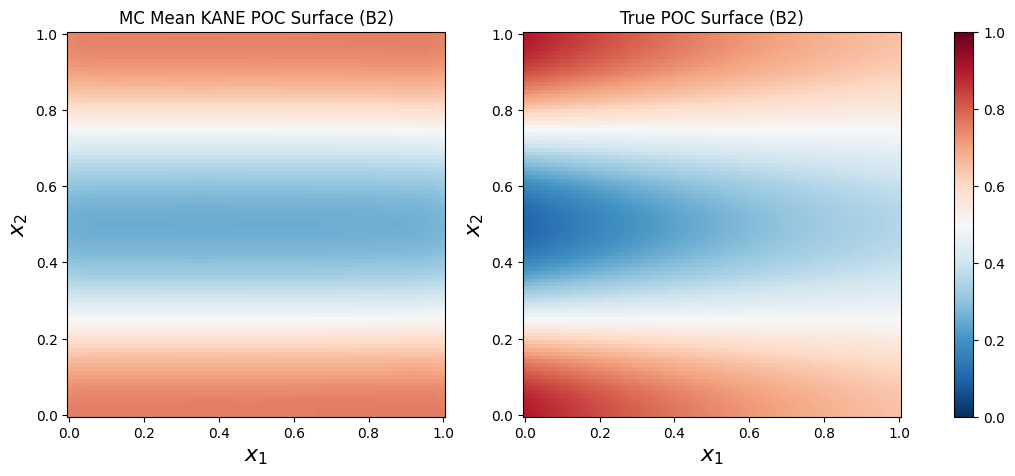}\\
    \includegraphics[width=0.95\linewidth]{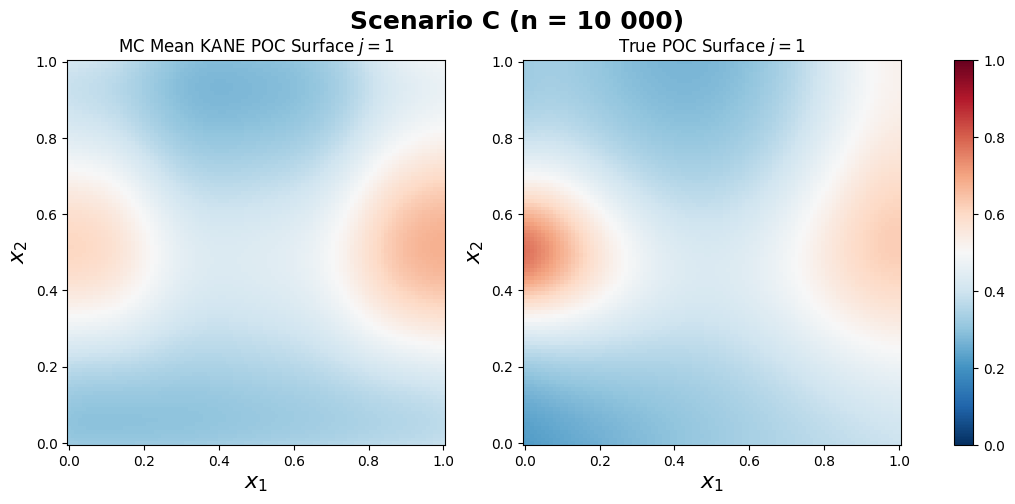}\\
    \includegraphics[width=0.95\linewidth]{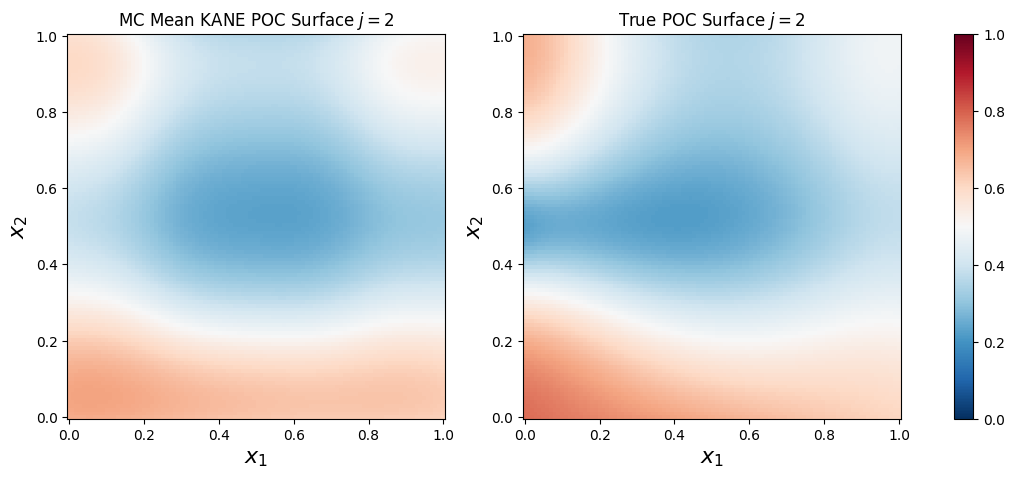}\\
    \includegraphics[width=0.95\linewidth]{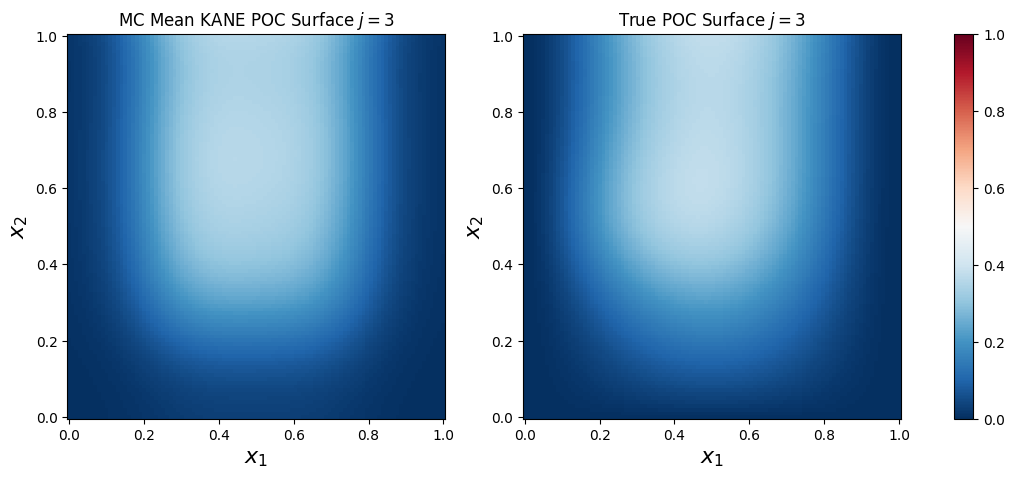}  
    \caption{Monte Carlo means for Scenarios A, B, and C.}
    \label{fig:ABCMC}
\end{figure}
\end{minipage}

\section{Empirical Examples}\label{applications}

\subsection{Earthquake--Tsunami Data}

Coastal regions are highly vulnerable to tsunamis, with their impacts often exacerbated by significant earthquakes that act as primary triggers. This section applies the proposed method to quantify the POC for tsunami occurrence ($I_{u,\mathbf{x}}$) triggered by extreme earthquakes ($Y_{\mathbf{x}}>u$).

Data were gathered from the NCEI/WDS Global Significant Earthquake Database, provided by the NOAA National Centers for Environmental Information, and consist of a point process data on earthquake locations (latitude and longitude) from 2150 B.C., including whether a tsunami followed or not. Figure \ref{fig:EDA_4.1} shows the spatial distribution of significant earthquakes and associated tsunamis.

\begin{figure}[H]
    \centering
    \includegraphics[width=0.8\linewidth]{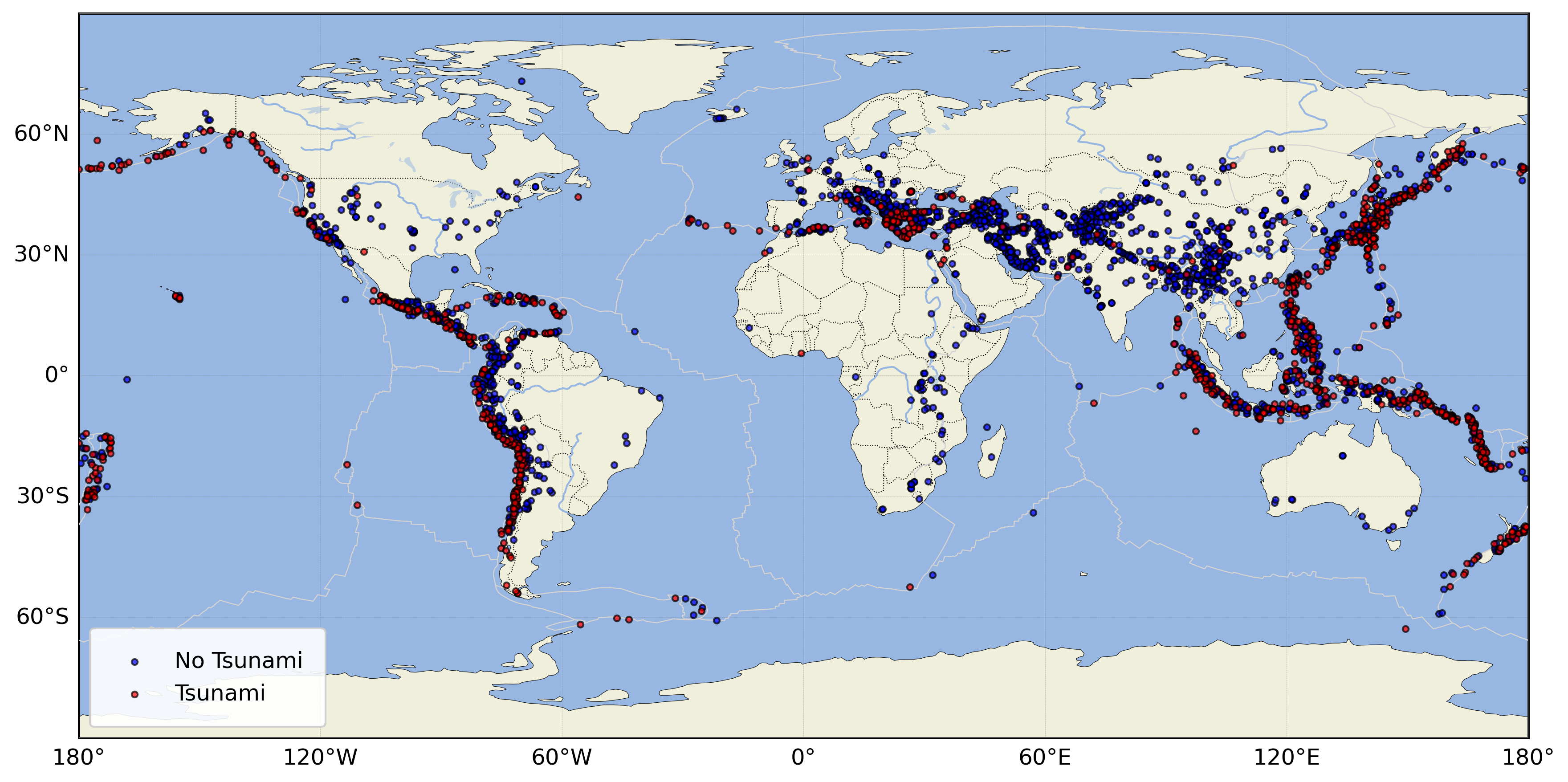}
    \caption{Point pattern of earthquakes (red) and associated tsunami occurrences (blue).}
    \label{fig:EDA_4.1}
\end{figure}

The trigger events are thresholded at $u = F_{Y_{\mathbf{x}}}^{-1}(0.95)$, resulting in 167 observations of earthquakes with magnitudes greater than 7.9 on the Richter scale, along with the corresponding indicators of tsunami occurrence. The fitted POC surfaces were modeled according to Section~\ref{KANE Bernoulli}, with a three-layer KANE framework following the setup from Section~\ref{simulation}, and considering the features
$\mathbf{x}=(\text{latitude, longitude, depth})^{T}$.

As can be seen from Fig.~\ref{fig:Application1}, the fitted surface tends to be particularly pronounced, for example in the coast of Chile, California, and the Pacific ring in comparison with the Caribbean; the same figure also illustrates that POC tends to be low at greater depths (e.g., 85th percentile), hence suggesting that earthquakes which are deeper would be less likely to lead to a cascading tsunami event. All in all, the usual extrapolation disclaimer applies. Interpretations should account for the fact that the quality of the fits is more reliable in regions with greater data availability, while their accuracy should be questioned in regions with limited data. Finally, the QQ-boxplot of the randomized residuals, shown in Fig.~\ref{fig:qqboxplots_1}, indicates an overall good fit; further trajectories in the supplementary materials indicate a similar finding. 

\begin{figure}
  \centering \footnotesize

  \hspace{0.4cm} \textbf{Depth (Percentile 15)} \hspace{5cm} \textbf{Depth (Percentile 40)} 
  \includegraphics[trim={0 3.1cm 0 0}, clip, width=0.45\linewidth]{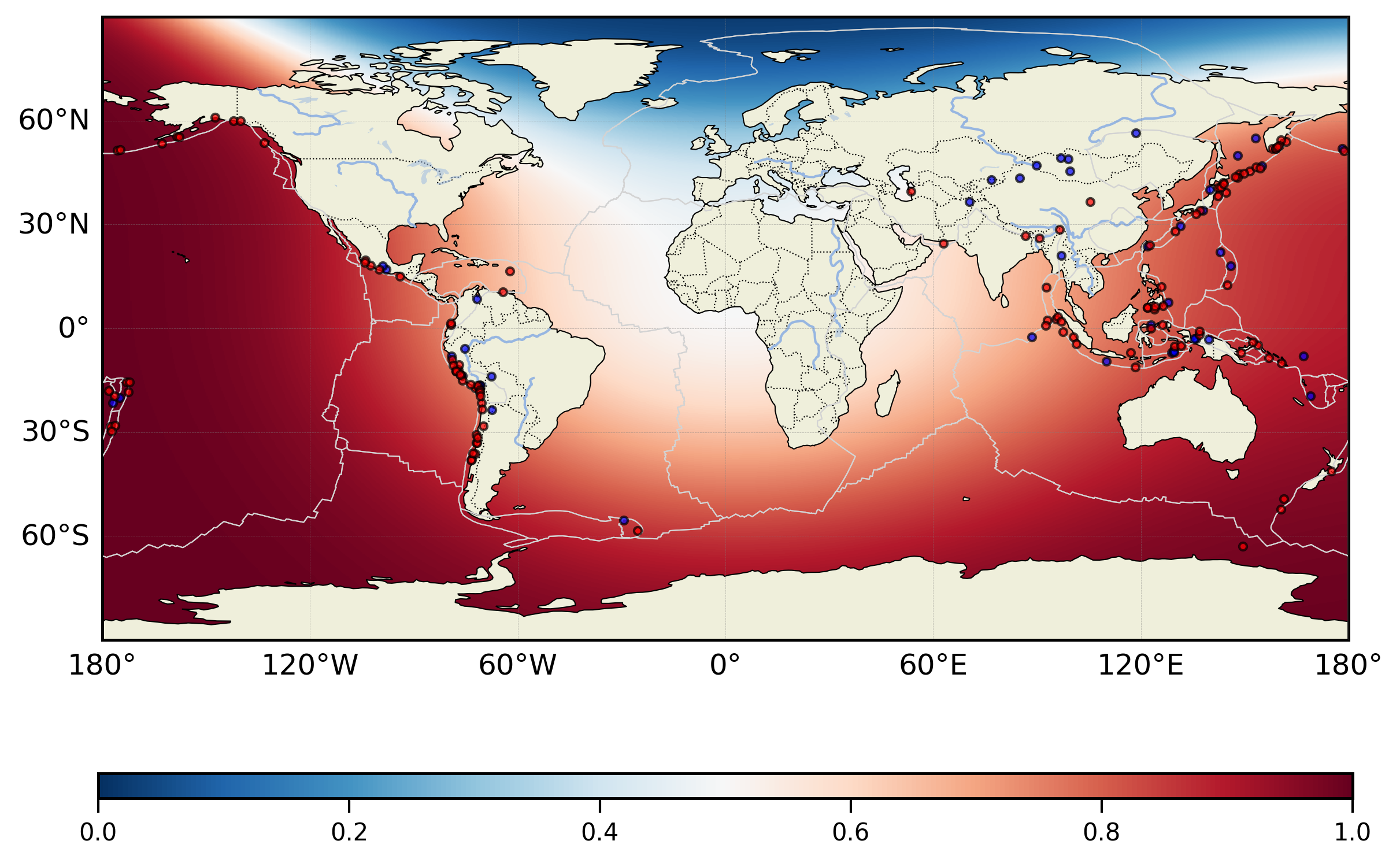}
  \includegraphics[trim={0 3.1cm 0 0}, clip, width=0.45\linewidth]{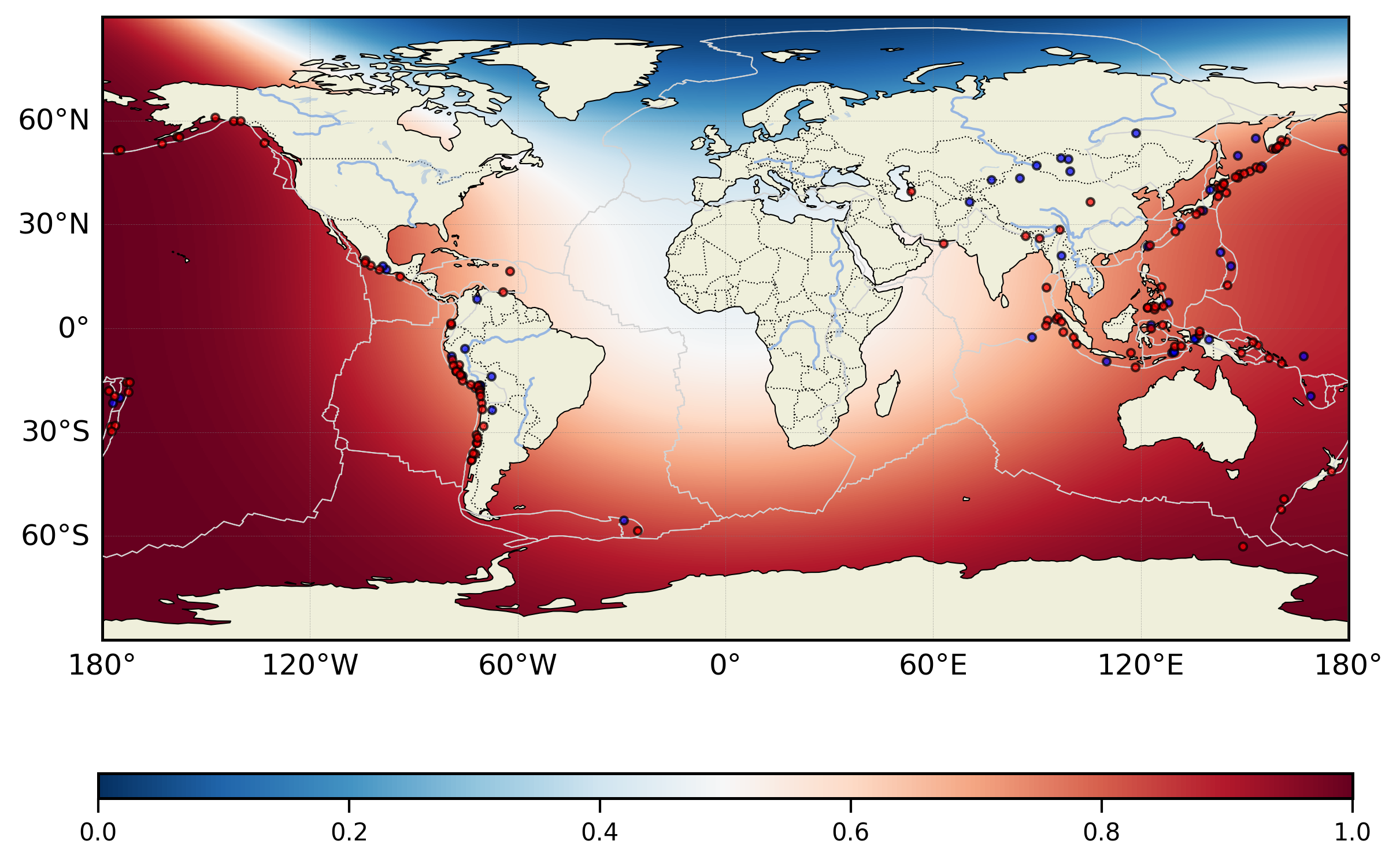} 

  \hspace{0.4cm} \textbf{Depth (Percentile 60)} \hspace{5cm} \textbf{Depth (Percentile 85)} 
  \includegraphics[trim={0 3cm 0 0}, clip, width=0.45\linewidth]{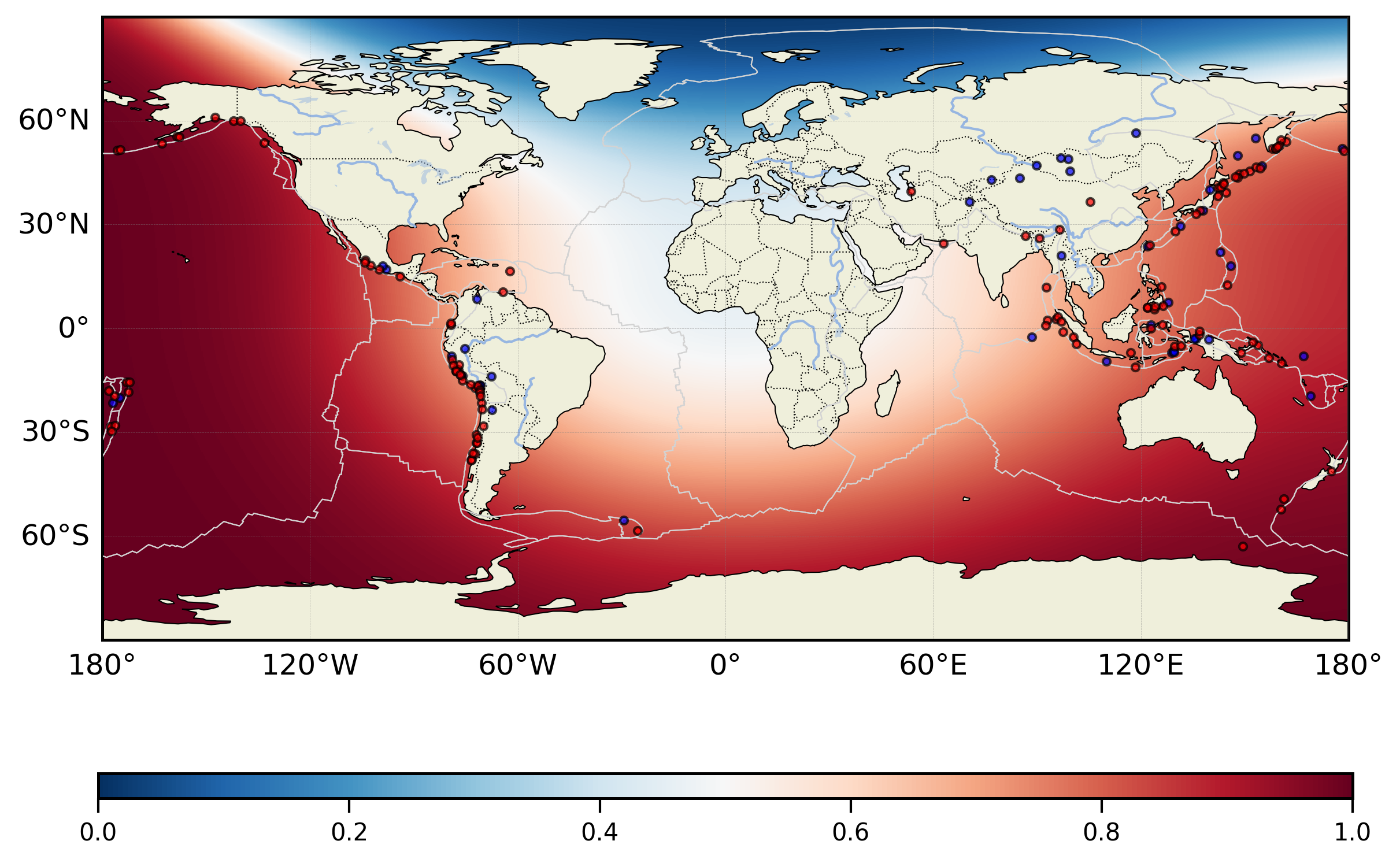}
  \includegraphics[trim={0 3cm 0 0}, clip, width=0.45\linewidth]{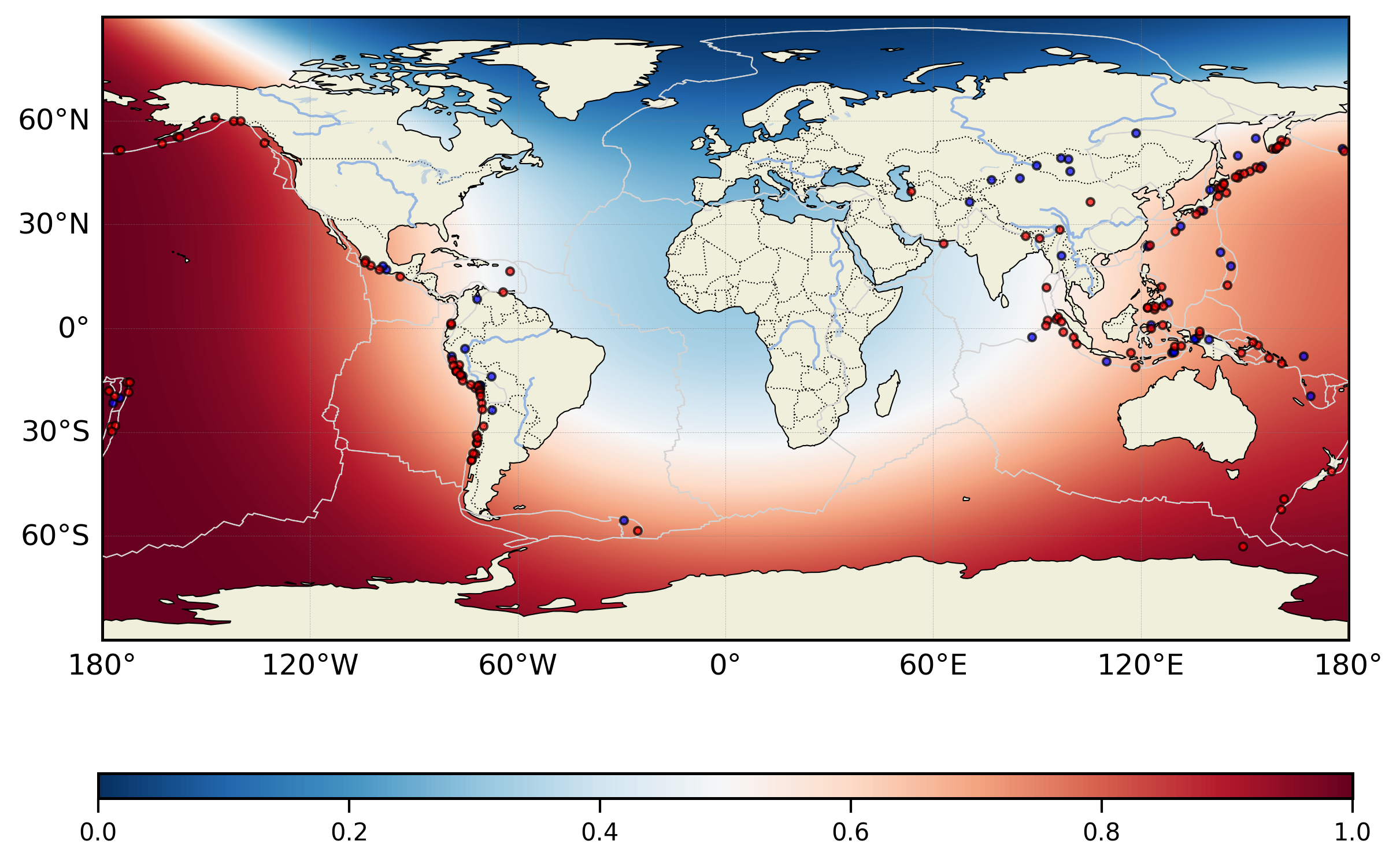} 

  \vspace{-0.1cm}  
  \hspace*{-0.4cm} \includegraphics[trim={0 0 0 10cm}, clip, width=1.02\linewidth]{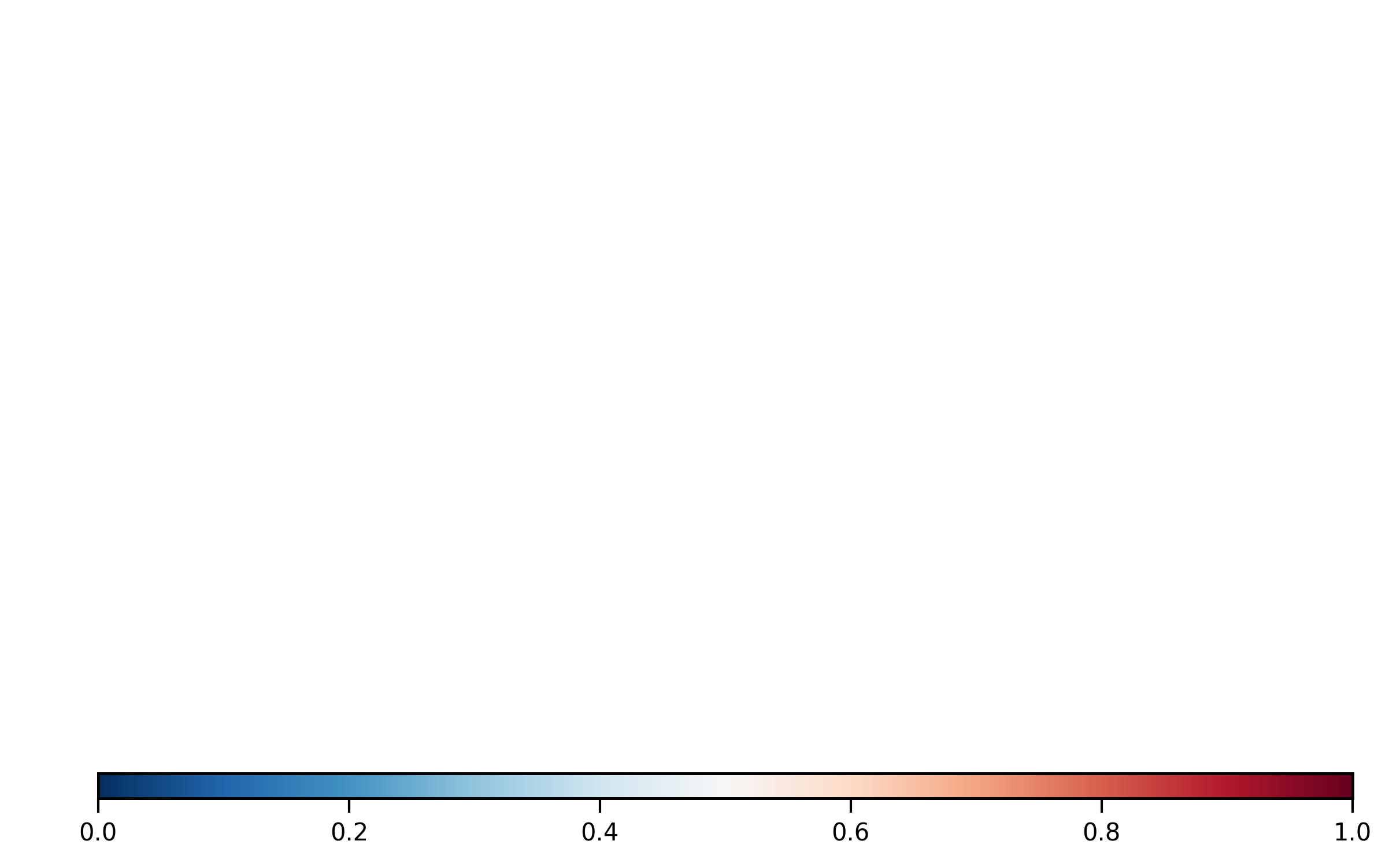} 
  \caption{KANE POC surface estimate for Earthquake--Tsunami data, considering percentile 15, 40, 60, and 85 of depth.}
  \label{fig:Application1}
\end{figure}

\subsection{Tropical Cyclone--Sea Surface Temperature Data}

Rising sea surface temperature (SST) anomalies significantly impact coastal and oceanic regions, particularly by influencing tropical cyclone intensity—categorized as Tropical Depression, Tropical Storm, and Hurricane in the North Atlantic and Northeast Pacific. Our second illustration will shed light on the ordered multicategory POC surface for tropical cyclone types ($\mathbf{I}_{u,\mathbf{x}}$), conditional on extreme SST events ($Y_{\mathbf{x}}>u$).

Data were obtained from two complementary sources covering 1981--2016: The ESA Climate Change Initiative, which provides global daily-mean SST observations on a latitude-longitude grid, and the NOAA Atlantic Hurricane dataset, which contains six-hourly storm tracks. SST values exceeding $u = F_{Y_{\mathbf{x}}}^{-1}(0.95)=302.74\,\mathrm{K}$ define extreme trigger events, yielding 456 observations. 

\begin{figure}[H]
    \centering
    \begin{tabular}{cc}
        \textbf{Tropical Depression} & \textbf{Tropical Storm} \\[2pt]
        \includegraphics[trim={3.5cm 2.5cm 4cm 0.75cm}, clip, width=0.425\textwidth]{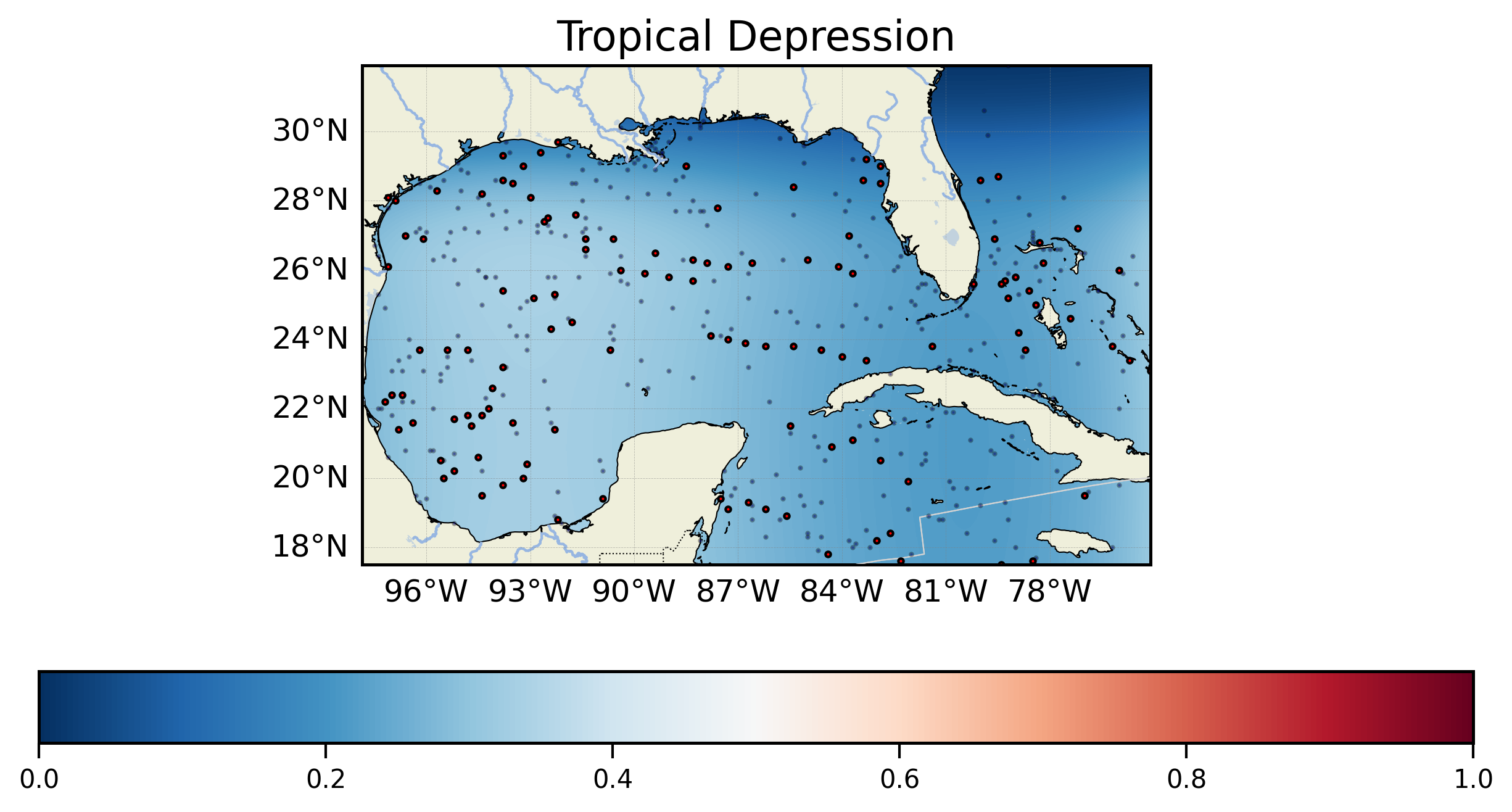} &\hspace{-0.5cm}
        \includegraphics[trim={3.5cm 2.5cm 4cm 0.75cm}, clip, width=0.425\textwidth]{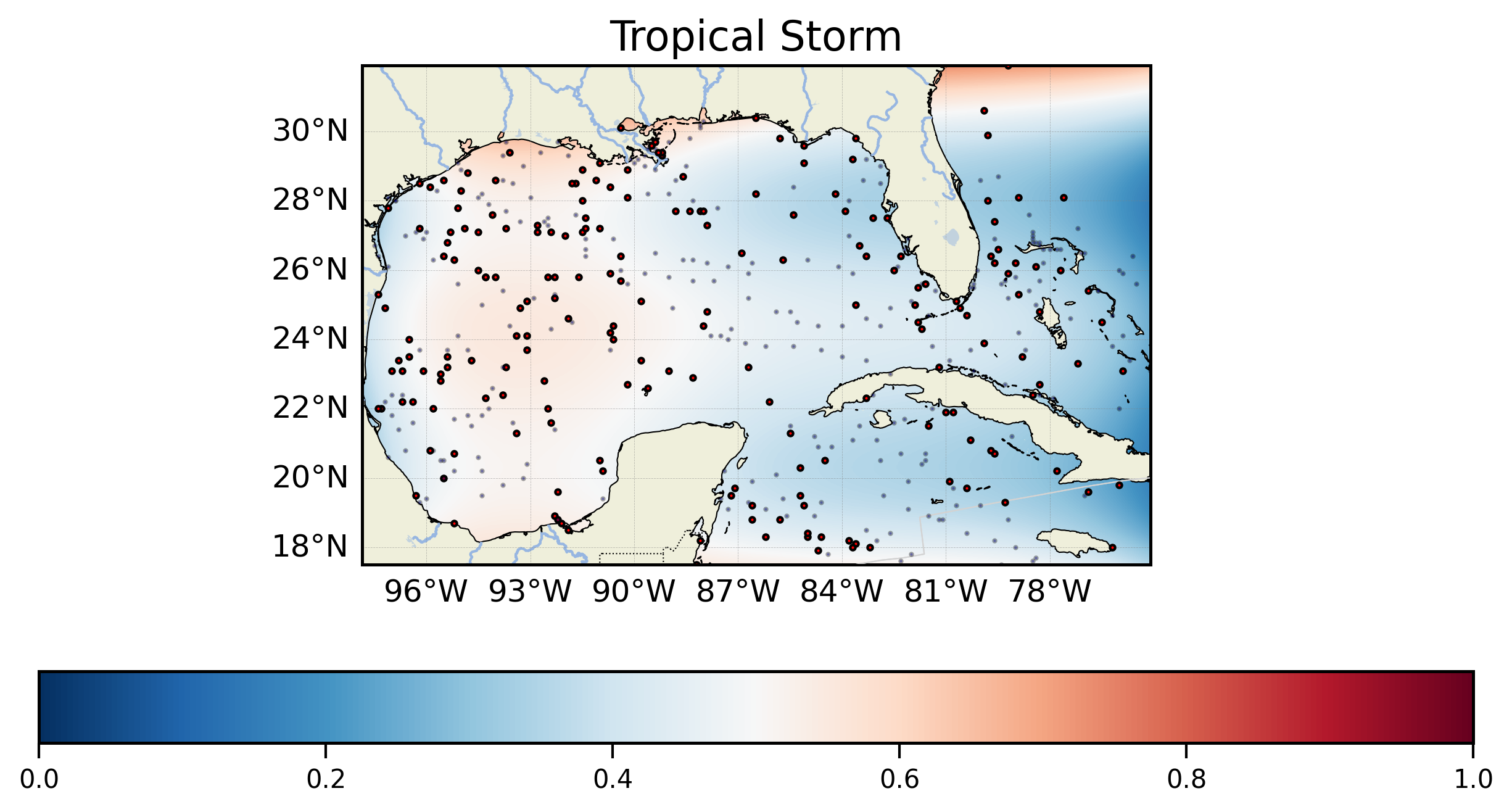} \\[5pt]
    \end{tabular}
    
    \textbf{Hurricane} \\[2pt]
    \includegraphics[trim={3.5cm 2.5cm 4cm 0.75cm}, clip, width=0.425\textwidth]{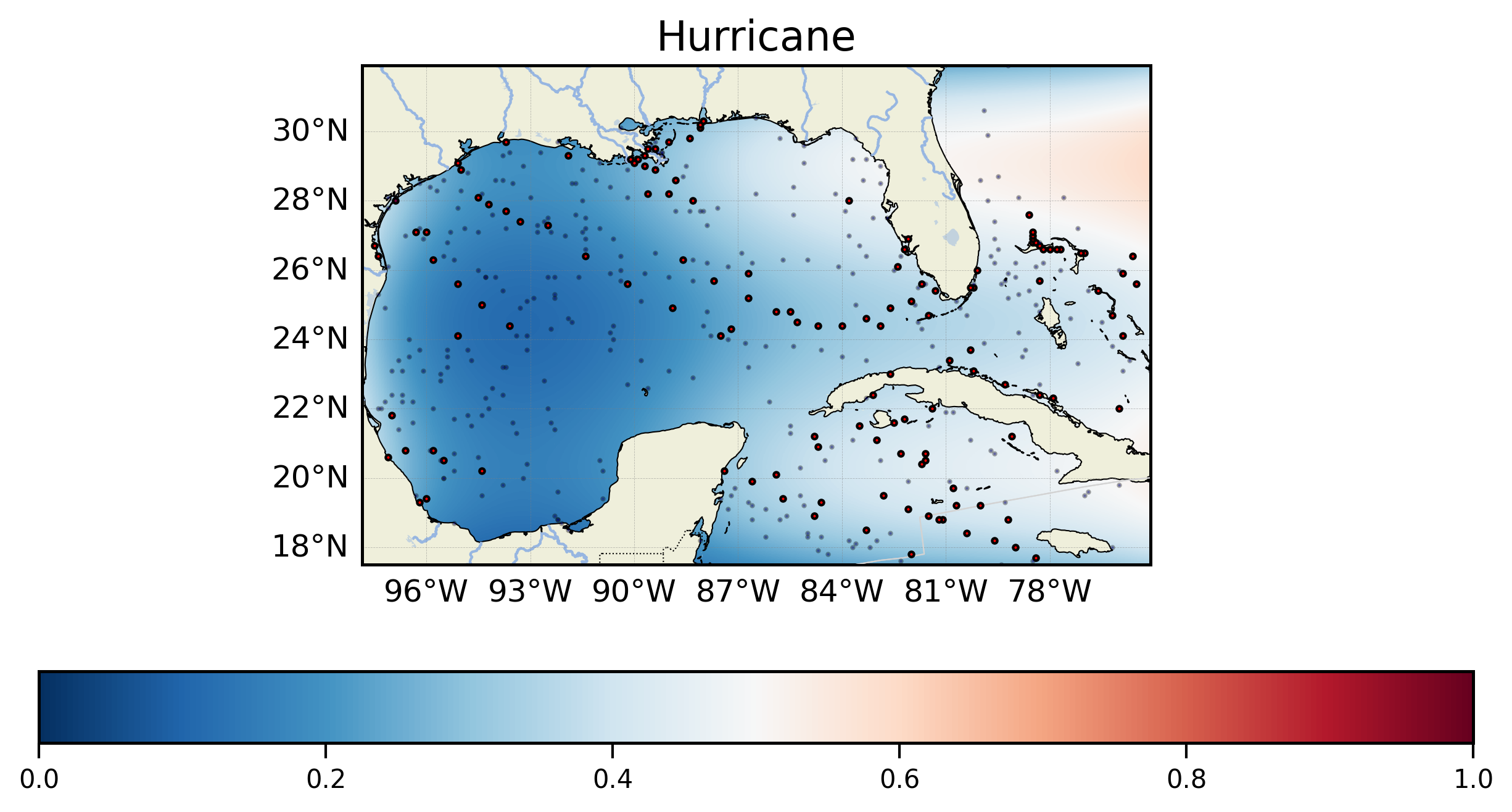} \\[5pt]

    \hspace*{-0.4cm} \includegraphics[trim={0 0 0 11cm}, clip, width=1.0\textwidth]{figures/BarColor.png} \\

    \caption{KANE POC surface estimate for Tropical Cyclone--SST data.}
    \label{Application2}
\end{figure}

Keeping in mind the ordinal nature of the data, the fitted POC surfaces were modeled according to the Frank--Hall variant of our model (Section~\ref{extensions}), with a three-layer KANE framework, and considering the features $\mathbf{x} = (\text{latitude, longitude})^{T}$. Among other insights, the fitted surfaces suggest that high sea surface temperatures tend to be associated with more extreme tropical storms in parts of the Gulf of Mexico. Aside from this, the analysis also reveals that the probability of cascade remains relatively small for tropical depression and hurricanes. Finally, the QQ-boxplot of the randomized residuals is shown in Fig.~\ref{fig:qqboxplots_1} and suggests an overall good fit of the model; further trajectories, presented in the supplementary materials, provide comparable evidence.

\begin{figure}
    \centering
    \begin{tabular}{cc}
        \textbf{Earthquake--Tsunami} & \textbf{Tropical Cyclone--SST} \\[2pt]
        \includegraphics[trim={0 1.5cm 0 0}, clip,,width=0.475\linewidth]{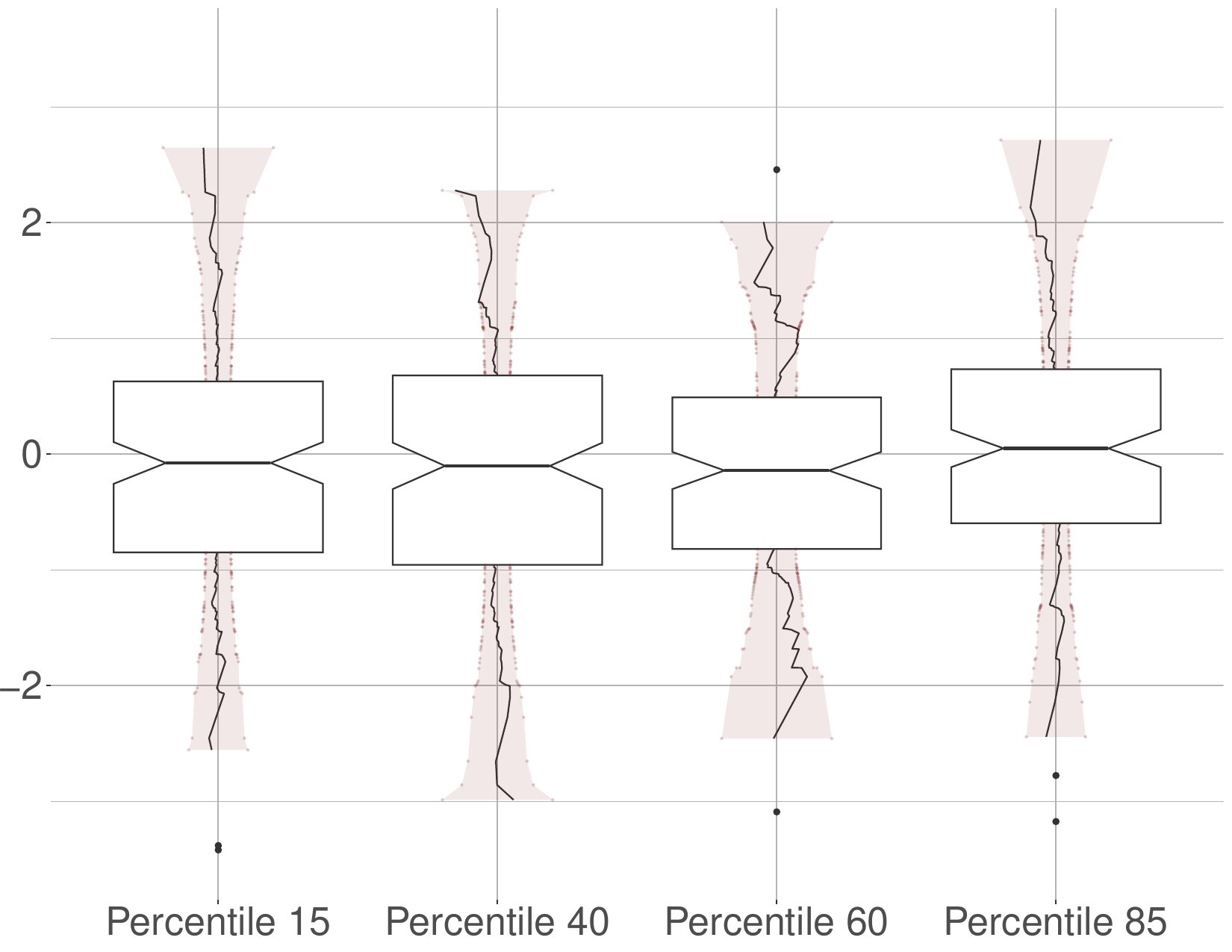} &
        \includegraphics[trim={0 1.5cm 0 0}, clip,,width=0.475\linewidth]{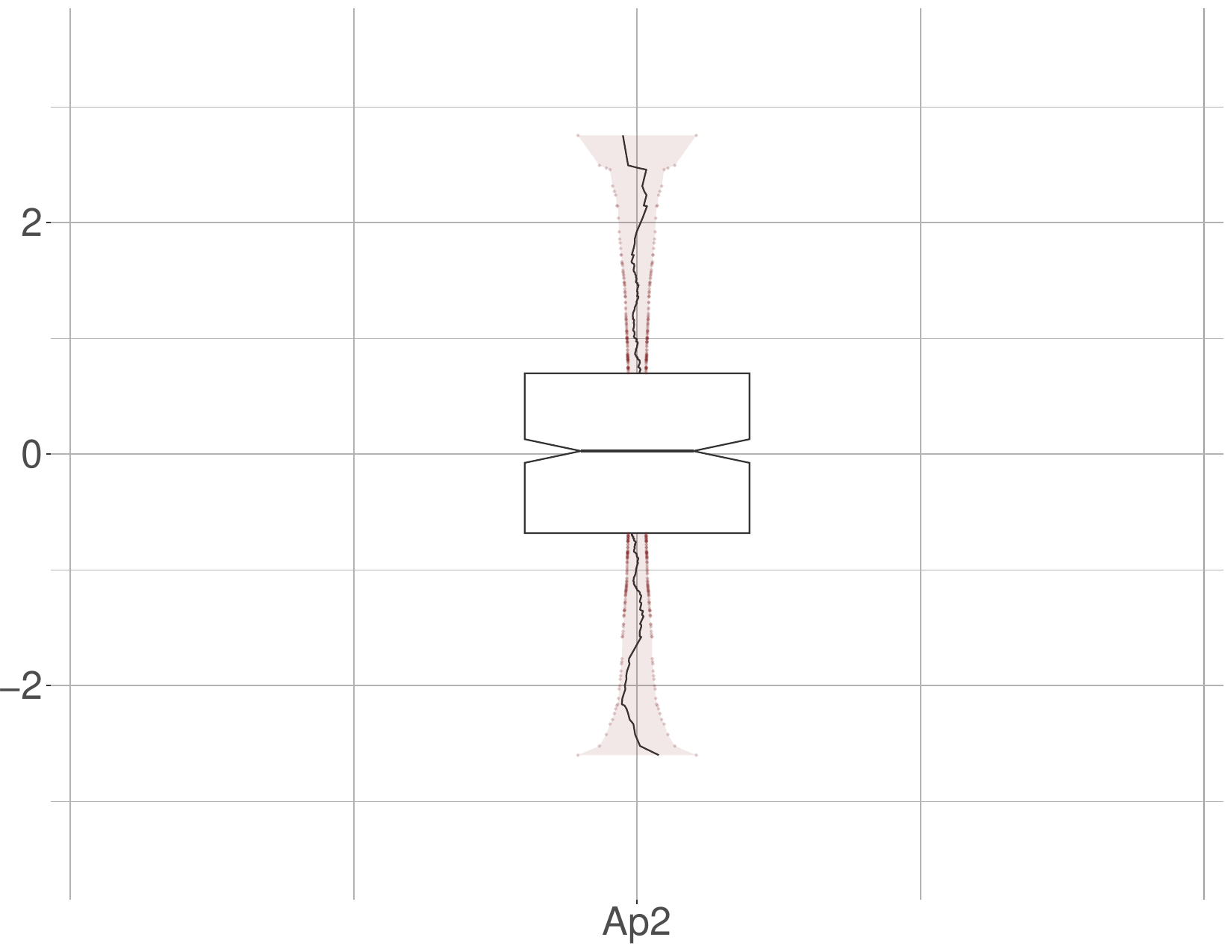}
    \end{tabular}

    \caption{QQ-boxplots of Dunn--Smyth residuals for empirical illustrations.}
    \label{fig:qqboxplots_1}
\end{figure}

\section{Implications and Future Directions}\label{discussion}
This paper develops a novel statistical framework to address the growing concern of cascading extreme events—such as earthquakes triggering tsunamis or heatwaves sparking wildfires, which in turn lead to further losses. 
By integrating EVT and AI, the proposed 
methods aim to enable extrapolation beyond observed data to capture and model complex chains of extreme 
events. Although straightforward, the additional `Natural Enforcement' of KAN discussed in this work may be valuable for other applications. For instance, it could be relevant when modeling covariate-adjusted versions of other extreme value parameters, such as $\bar{\chi}$ \citep[][Chapter~8]{coles2001}, which is constrained to the range $(-1, 1]$. This constraint could similarly be enforced naturally within our $g$-layer.

While KAN models have been claimed to be explainable \citep{liu2024}, one might consider the possibility of fitting POC surfaces using a more conventional explainable statistical model. If explainability is a priority, the POC surfaces introduced above can be formulated as an additive generalized linear models along the same lines as in \cite{lee2024}. 

We close the paper with some final comments on future research. In practice, either the follow-up or the trigger event may themselves be functional in the sense of FDA (Functional Data Analysis) \citep{horvath2012, kokoszka2017}. For example, we may observe $I_{u, \mathbf{x}} = 1$ for $\mathbf{x}$ over a continuum rather than over a point $\mathbf{x} \in [0, 1]^d$. As a concrete instance of this, in the earthquake data application, a follow-up event could represent the full region $S \subset [0, 1]^2$ affected by the tsunami, in which case $I_{u, \mathbf{x}} = 1$ for all $\mathbf{x} \in S$. While our theory also accommodates this framework, further investigation is needed to incorporate such functional events into the inferences in a fully FDA-aligned fashion. Secondly, while there are fundamental differences between our approach and Hawkes processes, there may be interest in combining our framework with the cross-excitation of multiple point processes introduced by \cite{hawkes1971}.

Finally, while the proposed approach targets two-chain setups involving a trigger and follow-up extreme event, it may set the stage for developments that accommodate extremal chains of random length ($k$), and also account for the role of the chain's pathways. TDMs (Tail Dependence Matrices) \citep{embrechts2016} may be used to govern the pathway of the extremal cascade; in this setting the order of chained events could be dictated by a random permutation of $\{1, \dots, k\}$, based on transition probabilities for the TDMs between extremal events. This would allow for modelling and learning from the data the length or size of the extremal cascade ($k$) and the distribution of the pathway taken by it (say, $4 \to 1 \to 2 \to 3$ or $3 \to 2 \to 1 \to 4$ as two examples of realizations of the chain of extremes).

We leave these open problems to future work.

\section*{Appendix}
{\subsection*{Appendix~A: The Kolmogorov Superposition Operator and its Approximation Theory}}
{As shown below small perturbations in the inner and outer 
  functions induce only small perturbations in the resulting function. This stability property may be of independent interest, and it holds under assumptions nearly as mild as those of the Kolmogorov superposition theorem itself. Hence, we refer to the operator introduced below as the Kolmogorov superposition operator.}

Throughout, $\|f\|_{\infty} \equiv \|f\|_{\infty}^A := \sup_{x \in A} |f(x)|$, and $C(A)$ and $C_{\text{Lip}}(A)$ denote the spaces of continuous and Lipschitz continuous functions on $A \subseteq \mathbb{R}$, respectively. To ease notation, let $I = \{1,\ldots,2d+1\}$ and $J = \{1,\ldots,d\}$.  Finally, we equip $C([0,1])^{(2d+1)d} \times C_{\text{Lip}}(\mathbb{R})^{2d+1}$ with the max norm 
\begin{equation*}
  \|\boldsymbol\Phi\| = \|(\boldsymbol\Phi^{(1)}, \boldsymbol\Phi^{(2)})\| = \max(m_1, m_2), \qquad 
\end{equation*}
where $$m_1 = \max_{(i, j) \in I \times J}\|\Phi^{(1)}_{i, j}\|_{\infty},
\qquad
m_2 = \max_{i \in I} \|\Phi_i^{(2)}\|_{\infty}
.$$
For notational simplicity, here and below we omit the domain in the supremum norm, though it should be understood that, for instance, $\|\Phi^{(1)}_{i, j}\|_{\infty} = \|\Phi^{(1)}_{i, j}\|_{\infty}^{[0,1]}$, $\|\Phi_i^{(2)}\|_{\infty} = \|\Phi_i^{(2)}\|_{\infty}^{\mathbb{R}}$, and similarly for other terms. The following result holds.

{\begin{theorem}[Continuity of Kolmogorov superposition operator]\label{suppoperator}
    Consider the operator
    $$\mathcal{K}: C([0,1])^{(2d+1)d} \times C_{\emph{Lip}}(\mathbb{R})^{2d+1} \;\to\; C([0,1]^d),$$
    defined by
$\mathcal{K}(\boldsymbol\Phi^{(1)}, \boldsymbol\Phi^{(2)})(x_1,\dots,x_d) \;=\; \sum_{i=1}^{2d+1} \Phi^{(2)}_i \!( \sum_{j=1}^d \Phi^{(1)}_{i,j}(x_j)),$ with 
\begin{equation*}
  \begin{split}
  \boldsymbol{\Phi}^{(1)} = \big(\Phi^{(1)}_{i,j}\big)_{(i,j)\in I\times J} \in C([0,1])^{d(2d+1)}, \qquad 
  \boldsymbol{\Phi}^{(2)} = \big(\Phi^{(2)}_{i}\big)_{i\in I} \in 
    C_{\emph{Lip}}(\mathbb{R})^{2d+1},
  \end{split}
\end{equation*}
and $I = \{1,\ldots, 2d+1\}$ and $J = \{1,\ldots,d\}$. Then, $\mathcal{K}$ is a continuous operator. 
\end{theorem}}
{\begin{proof}
Let $\tilde{\boldsymbol\Phi} = (\tilde {\boldsymbol\Phi}^{(1)}, \tilde{\boldsymbol\Phi}^{(2)}) \in C([0,1])^{(2d+1)d} \times C_{\text{Lip}}(\mathbb{R})^{2d+1}$ be arbitrary, and let $L_i$ be the Lipschitz constant for the outer function $\boldsymbol{\Phi}^{(2)}_{i}$; set $L = \max_{i \in I} L_i$. By the triangle inequality and the assumption of Lipschitz continuity of the outer functions, it follows that
\begin{align*}
\|\mathcal{K}(\boldsymbol\Phi)-\mathcal{K}(\tilde{\boldsymbol\Phi})\|_\infty
&\le
\|\mathcal{K}(\boldsymbol\Phi^{(1)},\boldsymbol\Phi^{(2)})-\mathcal{K}(\tilde{\boldsymbol\Phi}^{(1)},\boldsymbol\Phi^{(2)})\|_\infty
+\|\mathcal{K}(\tilde{\boldsymbol\Phi}^{(1)},\boldsymbol\Phi^{(2)})-\mathcal{K}(\tilde{\boldsymbol\Phi}^{(1)},\tilde{\boldsymbol\Phi}^{(2)})\|_\infty\\
&=
\|\boldsymbol\Phi^{(2)}\circ\boldsymbol\Phi^{(1)}-\boldsymbol\Phi^{(2)}\circ\tilde{\boldsymbol\Phi}^{(1)}\|_\infty
+\|\boldsymbol\Phi^{(2)}\circ\tilde{\boldsymbol\Phi}^{(1)}-\tilde{\boldsymbol\Phi}^{(2)}\circ\tilde{\boldsymbol\Phi}^{(1)}\|_\infty\\
&\le
\textstyle\sum_{i=1}^{2d+1} L_i\Bigl\|\textstyle\sum_{j=1}^{d}\bigl(\Phi^{(1)}_{i,j}-\tilde{\Phi}^{(1)}_{i,j}\bigr)\Bigr\|_\infty
+\Bigl\|\textstyle\sum_{i=1}^{2d+1}\bigl(\Phi^{(2)}_{i}-\tilde{\Phi}^{(2)}_{i}\bigr)\Bigr\|_\infty\\
&\le
\textstyle\sum_{i=1}^{2d+1} L_i\textstyle\sum_{j=1}^{d}\|\Phi^{(1)}_{i,j}-\tilde{\Phi}^{(1)}_{i,j}\|_\infty
+\textstyle\sum_{i=1}^{2d+1}\|\Phi^{(2)}_{i}-\tilde{\Phi}^{(2)}_{i}\|_\infty\\
&\le L(2d+1)d\,m_1^*+(2d+1)m_2^*.
\end{align*}

where $$m_1^* = \max_{(i, j) \in I \times J}\|\Phi^{(1)}_{i, j} - \tilde \Phi^{(1)}_{i, j}\|_{\infty}
,\qquad
m_2^* = \max_{i \in I} \|\Phi_i^{(2)} - \tilde \Phi_i^{(2)}\|_{\infty}
.$$
Thus, to achieve $\|\mathcal{K}(\boldsymbol\Phi) - \mathcal{K}(\tilde {\boldsymbol\Phi})\|_{\infty}
< \varepsilon$ for any $\varepsilon > 0$, it suffices to take $\|\boldsymbol{\Phi} - \tilde {\boldsymbol\Phi}\| < \delta$, with 
  \begin{equation*}
    \delta = \frac{\varepsilon}{(2 d + 1)(Ld + 1)}. 
  \end{equation*}
\end{proof}}
{Continuity of Kolmogorov superposition operator allows
  theoretical guarantees for the inner and outer functions to be
  translated directly into guarantees for the target function of
  interest. In particular, if the inner and outer functions can be
  approximated to a prescribed accuracy (for instance, by splines),
  then the induced approximation error on the target function (e.g.,
  the POC surface) can be controlled.}

{Since our model for the POC includes an additional layer (the $g$-layer), one may wonder about the approximation-theoretic implications of Theorem~\ref{suppoperator} for this extended architecture. For instance, under the mild assumption that $g \in C_{\mathrm{Lip}}(\mathbb{R})$ with Lipschitz constant $M > 0$, the addition of the $g$ layer does not compromise the approximation guarantees. Indeed, if $g \in C_{\mathrm{Lip}}(\mathbb{R})$ and 
  \begin{equation*}
    \|\mathcal{K}(\boldsymbol\Phi) - \mathcal{K}(\tilde{\boldsymbol\Phi})\|_{\infty}
    < \varepsilon / M,    
  \end{equation*}
then it follows immediately that
  \begin{equation}\label{glip}
    \|g \circ \mathcal{K}(\boldsymbol\Phi) - g \circ \mathcal{K}(\tilde {\boldsymbol\Phi})\|_{\infty}
    \leq M \|\mathcal{K}(\boldsymbol{\Phi}) - \mathcal{K}(\tilde {\boldsymbol\Phi})\|_{\infty}
    < \varepsilon. 
  \end{equation}
  The following example illustrates the consequences of Theorem~\ref{suppoperator} for our splines-based approach.
}
\begin{example}[Splines]
    Classical results in spline approximation theory
    \citep[e.g.,][Theorem~18]{kunoth2018} imply that any $h:A \to \mathbb{R}$ in the Sobolev space $W_{\infty}^{1}(A)$ can be uniformly approximated by a spline of degree $p$, that is,
\begin{equation}\label{splinesres}
    \|h - \tilde h\|_{\infty} < \gamma.
\end{equation}
Here, $\gamma > 0$ depends on $m$ and $p$ (as well as the smoothness of $h$), and it can 
be made arbitrarily small by considering a denser mesh. It is well-known that  $W_{\infty}^{1}(A) = C_{\text{Lip}}(A)$ \citep[][Section~5.8]{evans2022}. Thus, 
combining Theorem~\ref{suppoperator} with \eqref{splinesres} shows that any continuous 
multivariate function can be uniformly approximated if each inner and 
outer function is approximated by splines. Indeed, the argument in the proof of Theorem~\ref{suppoperator}, implies that setting $\|\boldsymbol\Phi - \tilde{\boldsymbol\Phi}\| < \delta$, with $L = \max_{i \in I} L_i$ and 
  \begin{equation*}
    \delta = \frac{\varepsilon}{M (2 d + 1)(Ld + 1)}, 
  \end{equation*}
yields
$\|\mathcal{K}(\boldsymbol{\Phi}) - \mathcal{K}(\tilde{\boldsymbol\Phi})\|_{\infty}
< \varepsilon / M$,
whenever $\tilde{\boldsymbol\Phi} = (\tilde{\boldsymbol\Phi}^{(1)},\tilde{\boldsymbol\Phi}^{(2)})$ consists of spline approximations, 
since splines satisfy \eqref{splinesres}. Finally, if we assume $g \in C_{\mathrm{Lip}}(\mathbb{R})$ with Lipschitz constant $M > 0$, then \eqref{glip} holds. \strut \hfill $\square$
\end{example}

\subsection*{Appendix B: Deep KANE}
A deep KANE model for the POC surface can be readily obtained by extending the line of attack of \cite{liu2024}. That is, the deep KANE formulation is based on the following $\ell$-layer specification, 
$$\alpha_{I}(\mathbf{x}) = {g}\left(({\boldsymbol\Phi^{(\ell - 1)}} \circ \cdots \circ {\boldsymbol\Phi^{(1)}})(\mathbf{x})\right),$$
where for $l \in \{1, \dots, \ell - 1\}$,
\begin{equation*}
  \boldsymbol\Phi^{(l)}=\begin{pmatrix}
    \Phi_{1,1}^{(l)} &  \cdots & \Phi_{1,n_l}^{(l)} \\
    \vdots & \ddots & \vdots\\
    \Phi_{n_{l + 1},1}^{(l)} & \cdots & \Phi_{n_{l + 1},n_l}^{(l)}\\
  \end{pmatrix}. 
\end{equation*}
Here, $n_l$ is the number of nodes in the $l$th layer, for $l = 1, \dots, \ell$. The three-layer KANE presented above is a particular case with $n_1 = d$, $n_2 = 2d + 1$, and $n_3 = 1$. For modeling in the deep framework we again use the specification in \eqref{hi}. The parameter of interest in this case is given by the following collection of matrices
\begin{equation*}
  \boldsymbol\beta^{(l)}_k = 
\begin{pmatrix}
\beta_{n_{l},1, k}^{(l)} &  \cdots & \beta_{n_{l},d, k}^{(l)} \\
\vdots & \ddots & \vdots \\
\beta_{n_{l + 1},1, k}^{(l)} & \cdots & \beta_{n_{l + 1},d, k}^{(l)}
\end{pmatrix},
\end{equation*}
where $k = 1, \dots, K$ and $l = 1, \dots, \ell - 1$. 

{\footnotesize
  \noindent \textbf{Acknowledgements}~We thank the Editor, Associate Editor, and two reviewers for their valuable feedback.} We thank Johnny Myung Won Lee and participants of GAME 2025 (Generative AI Modeling for Extreme Events) for insightful discussions. Finally, we thank members of GAIL (UoE Generative AI Lab), ECFI (Edinburgh Centre for Financial Innovations), and of \text{GLE$^2$N} (The Glasgow--Edinburgh Extremes Network) for  constructive comments and feedback. {} MdC is partially supported by the Royal Society of Edinburgh, Leverhulme Trust, and by Aberdeen Investments via the UoE Centre for Investing Innovation. Finally, funding from Funda\c{c}\~{a}o 
para a Ci\^{e}ncia e a Tecnologia (under grants UID/4106/2025 and UID/PRR/4106/2025) is gratefully acknowledged.
 
  \noindent \textbf{Conflicts of interest}~The authors declare that they have no conflict of interest. \vspace{0.2cm}
  
  \noindent \textbf{Data availability statement}~The datasets are available from the authors and will be included in the \textsf{R} package \texttt{DATAstudio} upon the acceptance of this paper (\url{https://cran.r-project.org/web/packages/DATAstudio/}). \vspace{0.2cm}

\noindent \textbf{Software}~Codes are publicly available from Github (\url{https://edin.ac/4gJxP2U}). \vspace{0.2cm}
}

\end{document}